\def\ps@headings{%
\def\@oddhead{\mbox{}\scriptsize\rightmark \hfil \thepage}%
\def\@evenhead{\scriptsize\thepage \hfil \leftmark\mbox{}}%
\def\@oddfoot{}%
\def\@evenfoot{}}
\theoremstyle{definition}
\newtheorem {theorem}{Theorem}
\newtheorem {proposition}{Proposition}
\begin{document}


\title{\huge{Energy Efficiency Maximization Via Joint Sub-Carrier Assignment and Power Control for OFDMA Full Duplex Networks}}


\author{Rojin~Aslani, \emph{Student Member, IEEE}, Mehdi~Rasti, \emph{Member, IEEE,} and Ata~Khalili, \emph{Student Member, IEEE}
\thanks{Copyright (c) 2015 IEEE. Personal use of this material is permitted. However, permission to use this material for any other purposes must be obtained from the IEEE by sending a request to pubs-permissions@ieee.org.}
\thanks{The authors are with the Department of Computer Engineering and Information Technology, Amirkabir University of Technology, Tehran, Iran (email: \{rojinaslani, rasti\}@aut.ac.ir and ata.khalili@ieee.org.)}
}

\maketitle

\begin{abstract}
In this paper, we develop an energy efficient resource allocation scheme for orthogonal frequency division {multiple access} (OFDMA) networks with in-band full-duplex (IBFD) {communication between} the base station and user equipments (UEs) considering a realistic self-interference (SI) model. 
Our primary aim is {to maximize} the system energy efficiency (EE) through a joint power control and sub-carrier assignment in both {the} downlink (DL) and uplink (UL), {where} the quality of service requirements of {the} UEs in DL and UL {are guaranteed}. The formulated problem is non-convex due to the non-linear fractional objective function and {the} non-convex feasible set which is generally intractable. In order to handle this difficulty, {we first use fractional programming to transform the fractional objective function to the subtractive form. Then, by employing Dinkelbach method, we propose an iterative algorithm in which an inner problem is solved in each iteration. Applying majorization-minimization approximation, we make the inner problem convex. Also, by introducing a penalty function to handle integer sub-carrier assignment variables, we propose an iterative algorithm for addressing the inner problem. We show that our proposed algorithm converges to the locally optimal solution which is also demonstrated by our simulation results.} In addition, simulation results show that by applying {the} IBFD capability in OFDMA networks with efficient SI cancellation techniques, our proposed resource allocation algorithm attains a 75\% {increase in the EE} {as} compared to {the} half-duplex system.
\end{abstract}

\begin{IEEEkeywords}
OFDMA cellular networks; in-band full-duplex; energy efficiency; joint resource allocation; power control; sub-carrier assignment; majorization-minimization.
\end{IEEEkeywords}

\IEEEpeerreviewmaketitle

\section{Introduction}
Nowadays, the fast development of wireless communication technologies increases energy consumption and carbon emission which raises concerns {across the globe}. According to \cite{carbonEmission}, it is estimated that the percentage of the global carbon emission due to the information and communication technologies is 5\% which increases significantly in coming years, and the situation will intensify with the arrival of 5G networks in the near future. Moreover, it is reported that network operators spend more than 10 billion dollars a year on electricity \cite{EnergyExpenditure}. Therefore, due to {the} importance of energy consumption in terms of environmental impact and cost, energy efficiency (EE) will be a significant feature in 5G wireless networks \cite{ekram}, {on which we focus} in this paper.

Recently, there have been some research efforts to design energy efficient resource allocation schemes for {o}rthogonal {f}requency {d}ivision {multiple} {a}ccess (OFDMA) cellular network{s} \cite{Venturino}-\cite{Miao}. The authors of \cite{Venturino} addressed the joint problem of user scheduling and power control to maximize the EE in the downlink (DL) of OFDMA cellular networks. In \cite{He}, an energy efficient resource allocation algorithm with proportional fairness is developed for DL multi-user OFDMA systems with distributed antennas. A low-complexity sub-optimal algorithm is developed in \cite{He}, which allocates sub-carrier{s} and power for the EE maximization considering constraints of bit error rates, proportional fair data rates and total transmit power of remote access units. Applying the Lagrangian dual decomposition technique, an energy efficient sub-carrier assignment and power control {scheme} is proposed in \cite{Xiao}, which aims at maximizing {the} EE. In \cite{loodaricheh}, the authors designed the joint relay selection, pairing, sub-carrier assignment and power control algorithms for maximizing the system EE {with} quality of service (QoS) {considerations}. Dinkelbach method is employed in \cite{loodaricheh} to tackle the non-linear fractional objective function and an optimal solution to relaxed problem is presented. In \cite{Ng}-\cite{Dong}, resource allocation schemes are proposed for EE maximization in the DL of OFDMA cellular networks with energy harvesting capability for {the} base station (BS) or user equipments (UEs). The resource allocation schemes to maximize the EE in the uplink (UL) of OFDMA wireless networks are developed in \cite{Zappone}-\cite{Miao}.

\begin{table*}[ht]
\caption{Summary of Related Works and Comparison With Our Proposed Approach}
\label{related work}
\centering
\begin{tabular}{|c|c|c|c|c|c|}\hline
{\bf Ref.} & \thead{\bf Type of Comm.} & \thead{\bf Mode of Operation} & \thead{\bf Objective Function} & {\bf Constraints} & \thead{\bf Solution Approach} \\ 
\hline \hline
\cite{Venturino} & {DL} & \thead{HD for BS} & \thead{Maximizing EE} & Feasible transmit power of BS & \thead{Centralized} \\
\hline
\cite{He} & {DL} & \thead{HD for BS} & \thead{Maximizing EE} &  \thead{Exclusive sub-carrier assignment, \\ Feasible transmit power of BS, \\ QoS constraint for UEs} & Centralized \\
\hline
\cite{Xiao} & {DL} & \thead{HD for BS} & \thead{Maximizing EE} & \thead{Exclusive sub-carrier assignment, \\ Feasible transmit power of BS} & \thead{Centralized, \\ Optimal \& sub-optimal} \\
\hline
\cite{loodaricheh} & \thead{UL, \\ DL} & \thead{OBFD for BS, \\ HD for UEs} & \thead{Maximizing EE} & \thead{Exclusive sub-carrier assignment, \\ Feasible transmit power of BS, \\ QoS constraint for UEs} & \thead{Centralized, \\ Dinkelbach method } \\
\hline
\cite{Dong} & DL & \thead{IBFD for BS, \\ HD for UEs} & \thead{Maximizing EE} & \thead{Exclusive sub-carrier assignment, \\ Feasible transmit power of BS, \\ QoS constraint for UEs} & \thead{Centralized sub-optimal, \\ Distributed} \\
\hline
\cite{Zappone} & UL & HD for UEs & \thead{Maximizing EE} & \thead{Feasible transmit power of UEs, \\ QoS constraint for UEs} & \thead{Centralized, \\ Distributed} \\
\hline
{\cite{Schober}} & {DL} & {IBFD for Relays} & \thead{{Maximizing} \\ {sum{-}rate}} & \thead{{Exclusive sub-carrier assignment,} \\ {Feasible transmit power of BS \& Relays,} \\ {QoS constraint for UEs}} & \thead{{Distributed,} \\ {Dual} {decomposition}} \\
\hline
{\cite{Gang}} & {DL} & {IBFD for Relays} & \thead{{Maximizing EE}} & \thead{{Exclusive sub-carrier assignment,} \\ {Feasible transmit power of BS \& Relays,} \\ {QoS constraint for UEs}} & \thead{{Centralized,} \\ {Dinkelbach} {method}} \\
\hline
\cite{crick} & \thead{UL, \\ DL} & \thead{IBFD for BS, \\ HD for UEs} & \thead{Maximizing \\ sum{-}rate} & \thead{Exclusive sub-carrier assignment, \\ Feasible transmit power of BS \& UEs} & \thead{Distributed, \\ sub-optimal } \\
\hline
\cite{wen} & \thead{UL, \\ DL} & \thead{IBFD for BS, \\ HD for UEs} & \thead{Maximizing EE} & \thead{Exclusive sub-carrier assignment, \\ QoS constraint for UEs} & Centralized \\
\hline
{\cite{Power}} & {\thead{UL, \\ DL}} & {\thead{IBFD for BS, \\ HD for UEs}} & {\thead{\thead{Minimizing \\ power consumption}}} & {\thead{Feasible transmit power of BS \& UEs, \\ QoS constraint for UEs}} & {\thead{Centralized, \\ optimal \& sub-optimal}} \\
\hline
{\cite{Reverse}} & {\thead{UL, \\ DL}} & {\thead{IBFD for BS, \\ HD for UEs}} & {\thead{Maximizing \\  sum{-}rate}} & {\thead{Exclusive sub-carrier assignment, \\ Feasible transmit power of BS \& UEs}} & {\thead{Centralized, \\ optimal \& sub-optimal}} \\ 
\hline
\cite{nam} & \thead{UL, \\ DL} & \thead{IBFD for BS \& UEs, \\ Complete SIC} & \thead{Maximizing \\  sum{-}rate} & \thead{Exclusive sub-carrier assignment, \\ Feasible transmit power of BS \& UEs} & \thead{Distributed, \\ Local Pareto optimality } \\ 
\hline
\thead{Our \\ work} & \thead{UL, \\ DL} & \thead{IBFD for \\ BS \& UEs} & \thead{Maximizing EE} & \thead{Exclusive sub-carrier assignment, \\ Feasible transmit power of BS \& UEs, \\ QoS constraint for UEs} & \thead{Centralized, \\ Dinkelbach method} \\ 
\hline 
\end{tabular}
\end{table*} 

Most of the proposed resource allocation schemes in {the} literature for OFDMA cellular networks {are} devoted to either the DL or UL transmission. In fact, in {most works}, it is implicitly considered that the DL and UL channels operate in half-duplex (HD) mode or out-band full-duplex (OBFD) mode, {where} a radio transceiver can either transmit or receive at different time{s} on the same frequency band or on different frequency band{s} at the same time, respectively. Recent advances in signal processing techniques have challenged this presumption and indicated the practicability of in-band full-duplex (IBFD) {communication}, where a radio transceiver simultaneously transmits and receives on the same frequency band. But, the notion of concurrent transmission and reception in a node makes self-interference (SI) in IBFD systems which is {a portion} of the transmitted signal of {an} IBFD node received by itself, so interfering with the desired signal received at the same time. Therefore, the key requirement to implement the IBFD communication is applying self-interference cancellation (SIC) methods. There are variant SIC methods presented in the literature (e.g., see \cite{SIC1} and \cite{SIC2}). {Equipped with} SIC methods, the IBFD {communication} has attracted a growing interest from both industrial and academic world, due to its potential of doubling the spectral efficiency. 
However, there are few efforts for redesigning the resource allocation algorithms in IBFD cellular networks. The authors of \cite{Psomas}-\cite{nam} apply the IBFD capability to OFDMA wireless networks employing different architectures. In particular, IBFD cellular networks can be categorized into two-node and three-node architectures \cite{Psomas}. In two-node architecture, referred also as bidirectional, both nodes, i.e., the BS and the UEs have IBFD capability. However, in three-node architecture, only the BS is IBFD-capable and the UEs work in the HD mode. In \cite{Psomas}, the outage probability of an IBFD cellular network for both cases of the two-node and three-node architectures is analytically characterized.
In \cite{Schober}-\cite{Reverse}, it is assumed that the three-node architecture is employed in IBFD cellular networks. In \cite{Schober} and \cite{Gang}, resource allocation schemes are proposed for a relay assisted OFDMA DL cellular network in which the relays are IBFD-capable.
In \cite{crick}-\cite{nam}, resource allocation schemes are proposed for both the UL and DL of OFDMA networks with IBFD capability. A joint power control and sub-carrier assignment algorithm is proposed in \cite{crick} for the system {sum-rate} maximization subject to the maximum transmit power and sub-carrier assignment constraints. A greedy sub-carrier assignment algorithm and an iterative water-filling power control algorithm were proposed in \cite{crick}. The problem of the EE maximiziation is addressed in \cite{wen} in which two different SI models are considered: constant and linear SI model. Then, an optimal algorithm to achieve the maximum EE via a Lagrangian joint optimization of power control and sub-carrier assignment is developed for constant SI model. Also, by decoupling the problem into two sub-problems of power control and sub-carrier assignment, a heuristic algorithm is provided for linear SI model. In \cite{Power}, the problem of DL beamforming and antenna selection, alongside with UL power control with the goal of minimizing power consumption of the network is studied. In \cite{Reverse}, an optimal and a suboptimal joint power and sub-carrier allocation policies are proposed to maximize the weighted system sum-rate in a multicarrier non-orthogonal multiple access network. The two-node architecture for IBFD networks is considered in \cite{nam}, where it is assumed that SIC methods are able to cancel the SI almost completely, then the authors propose an iterative algorithm to jointly optimize the power control and sub-carrier assignment with the aim of maximizing the system sum-rate.

In this work, we focus on designing the energy efficient resource allocation scheme for joint power control and sub-carrier assignment in both {the} UL and DL {of} OFDMA networks with IBFD capability. Similar to \cite{nam}, we assume that the two-node architecture is employed in IBFD cellular network (opposing with \cite{crick}-\cite{Reverse} where the three-node architecture is employed). We further consider a more realistic assumption on the SI model in {the} BS and UEs. In fact, there are two main assumptions on the SI model in the literature: 1) {constant} SI independent of the transmit power \cite{wen}, \cite{nam}, which {has less complexity} but is not the case in practice, 2) {varying} SI proportional to the transmit power ({\cite{wen}, \cite{Reverse},} \cite{SImodel1}-\cite{SImodel3}, \cite{SICforBS}-\cite{SICforUE2}), {leading to the complex but realistic problem formulations,} which is adopted in our paper {for modeling the SI in both the BS and UEs.} Although the assumption {of} SI proportional to the transmit power is more realistic, it {invokes non-convexity, making} the {designing of} energy efficient resource allocation schemes more challenging in comparison with \cite{nam}. 
To the best of our knowledge, there is no work in the literature that proposes a {joint} energy efficient resource allocation scheme for both {the} UL and DL {of} OFDMA cellular network{s} with IBFD{-capable} BS and UEs, {and with a practical SI model that is} proportional to the transmit power (see Table \ref{related work}). The contributions of this work are summarized as follows:
\begin{itemize}
\item We present a {system model} for {the} OFDMA network with IBFD capability in which all UEs and {the} BS operate in {the} IBFD mode and {the} UL transmission from a {given} UE to the BS and {the} DL transmission from the BS to {that} UE occur simultaneously in the same frequency band. We suppose that the SI in both {the} BS and UEs are proportional to their transmit power and they have different SIC {capabilities as is the case in reality}. We also formally state the EE maximization problem subject to the maximum transmit power of {the} UEs and BS and the QoS requirement of {the} UEs {in the UL and DL}. To the best of our knowledge, this problem has not been considered in the literature for OFDMA cellular network{s} with IBFD capability for both {the} BS and UEs. 
\item We propose a joint solution to the sub-carrier assignment and power control in {the} UL{s} and DL{s} considering QoS provisioning. {To do this, we first use the
fractional programming to deal with the fractional objective function of the EE. Then, we apply Dinkelbach algorithm to address the problem in which an inner optimization problem should be solved in each iteration. Applying the majorization-minimization (MM) algorithm, we make the inner problem convex. Also, by introducing a penalty function, we handle the integer sub-carrier assignment variables. Finally, we solve the obtained convex optimization problem using off-the-shelf software packages, e.g., CVX.}
\item {Extensive simulation results show the convergence of our proposed algorithm to the locally optimal solution and outperforming other resource allocation schemes proposed in \cite{He}, \cite{wen}, \cite{nam} and \cite{Leng}. Also, our simulation results reveal the effect of several factors such as the minimum data rate requirement of UEs, the maximum transmit power, the cell size, and the SI on the EE of {the} IBFD system. Our numerical results demonstrate that by applying IBFD capability in OFDMA networks with efficient SIC techniques, our proposed resource allocation scheme can operate 75\% more energy efficiently than that of {the} HD system proposed in \cite{He}. Also, our proposed algorithm performs 20\% and 35\% better than the algorithms proposed in \cite{wen} to maximize EE in IBFD system{s} with linear and constant SI model, respectively.}
\end{itemize}

The summary of {the} related works and comparison with our proposed approach {is presented} in Table \ref{related work}. The rest of this paper is organized as follows. In Section II, we introduce the system model and formulate the problem. Our proposed method for addressing the stated problem is presented in Section III. The numerical results and conclusion are presented in Sections IV and V, respectively.

\section{System Model and Problem Formulation}

\begin{figure}[H]
\centering
\includegraphics[width=3.2in]{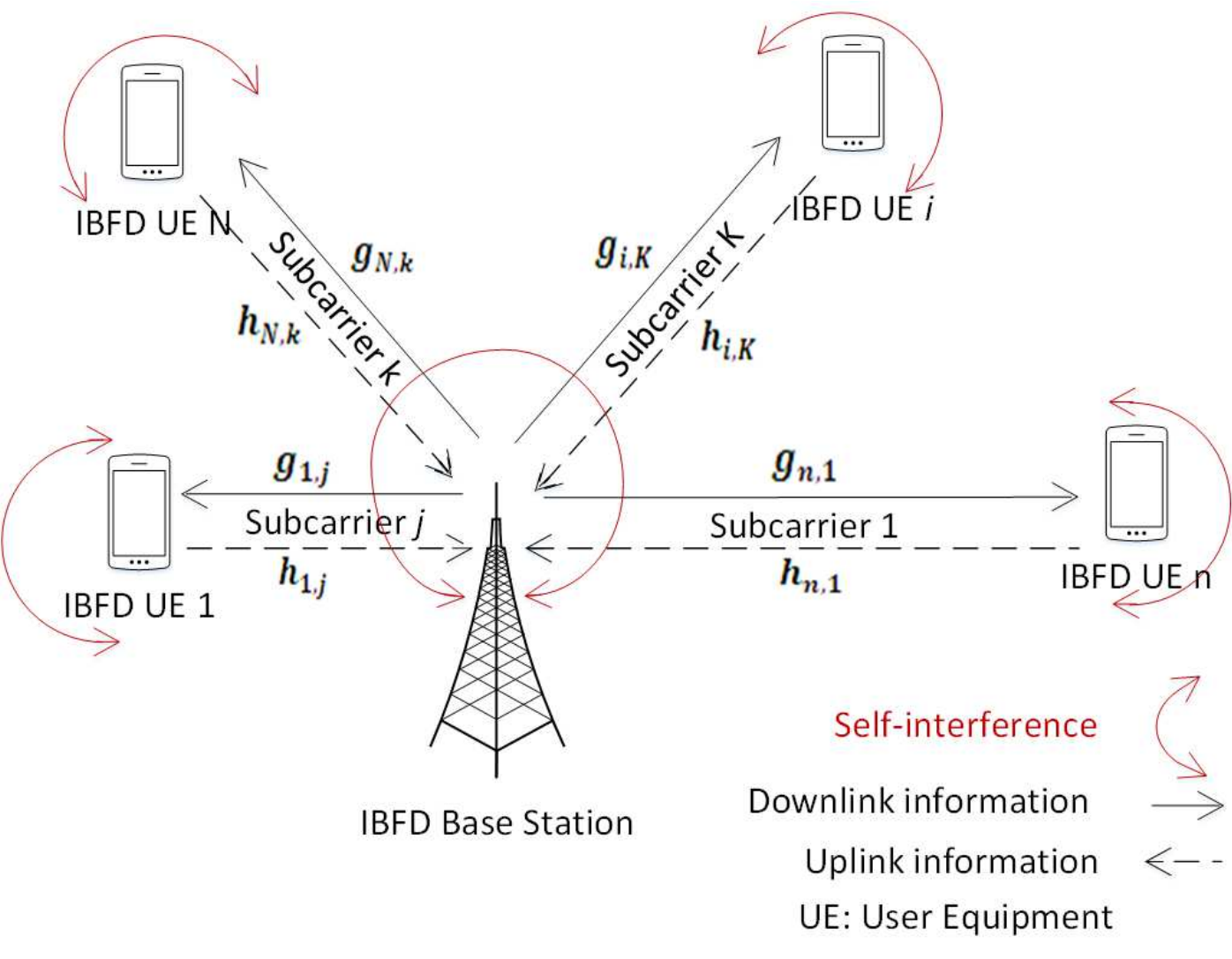}
\vspace{-10pt}
\caption{In-Band Full-Duplex OFDMA Wireless Network.}
\label{fig_systemModel}
\end{figure}

\subsection{System Model}

We consider a single-cell OFDMA network with one BS and $N$ UEs, the set of which is denoted by $\mathcal{N}\!\!=\!\!\{1,2,\cdots,N\}$. The system's total bandwidth is divided to $K$ sub-carriers perfectly orthogonal to each other, and let $\mathcal{K}\!=\!\!\{1,2,\cdots,K\}$ denote the set of sub-carriers. 

We assume that all the UEs and {the} BS operate in {the} IBFD mode and the UL and DL data transmission are simultaneously performed in the same frequency band. Assuming the separate antenna architecture for IBFD nodes presented in \cite{SIC1}, the BS and UEs are both equipped with two antennas in which the transmit and receive signals in the UL and DL use a dedicated antenna, both operating in the same frequency band.

Let $h_{n,k}$ denote the UL channel gain from the $n$th UE to the BS on {the} $k$th sub-carrier and $g_{n,k}$ denote the DL channel gain from the BS to {the} $n$th UE on {the} $k$th sub-carrier (see Fig. \ref{fig_systemModel}). Also, $p^\textnormal{u}_{n,k}$ and $p^{d}_{n,k}$ denote the transmit power of $n$th UE for transmitting {the} information signal to the BS on sub-carrier $k$ in {the} UL and the transmit power of the BS for transmitting {the} information to the $n$th UE on sub-carrier $k$ in {the} DL, respectively. The binary-valued sub-carrier assignment, $x_{n,k} \in \{0,1\}$ represents sub-carrier assignment for {the} $n$th UE on sub-carrier $k$ and is {defined as} 
\vspace{-3.5pt}
\begin{equation}
\label{subcarrierAssignment}
x_{n,k}=\begin{cases}
	1, & \textnormal{if {the} $k$th sub-carrier is assigned to {the} $n$th UE} \\
	0, & \textnormal{otherwise}. 
		\end{cases} 
\end{equation}

The vector $\bold{x}\in \mathbb{R}^{NK\times 1}$ represents the sub-carrier assignment of all the UEs and the vector $\bold{p}=[\bold{p}^\textnormal{u},\bold{p}^\textnormal{d}]^T$ represents the transmit power of UEs and BS, where $\bold{p}^\textnormal{u}=[p^\textnormal{u}_{n,k}]^T_{N\times K}$ is the UEs transmit power vector in UL and $\bold{p}^\textnormal{d}=[p^\textnormal{d}_{n,k}]^T_{N\times K}$ is the BS transmit power vector in DL. 
Considering a combination of the passive and active SIC methods \cite{Duarte} for all the UEs and BS, we assume that the residual SI is proportional to the transmit power (similar to \cite{Psomas}, \cite{Reverse}, \cite{SImodel1}-\cite{SImodel3} and \cite{SICforBS}-\cite{SICforUE2}). It should be noted that in practice, SI cannot be canceled completely even if the SI channel is perfectly known at the IBFD BS as well as IBFD UEs due to the limited dynamic range of the receiver \cite{FDMIMO}. Thus, the residual SI after cancellation at the receive antennas can be modeled as an independent zero-mean Gaussian distortion noise whose variance is proportional to the received power of the antenna \cite{Reverse}, \cite{FDMIMO}. Assuming different SIC capabilities in the BS and UEs, the SI power at the BS and the $n$th UE is represented by $\delta_\textnormal{BS} |l^\textnormal{SI}_\textnormal{BS}|^2 p^\textnormal{d}_{n,k}$ and $\delta_n |l^\textnormal{SI}_n|^2 p^\textnormal{u}_{n,k}$, respectively, where $0<\delta_\textnormal{BS} \ll 1$ and $0<\delta_n \ll 1$ are constants modeling the quality of the SIC at the BS and the $n$th UE, respectively, $l^\textnormal{SI}_\textnormal{BS} \in \bold{C}$ and $l^\textnormal{SI}_n \in \bold{C}$\footnote{{$\bold{C}$ denote the set of complex number.}} denote the SI channel gain at the BS and $n$th UE, respectively.
The noise is assumed to be additive white Gaussian noise (AWGN) whose power is $N_\textnormal{BS}$ at the BS and $N_n$ at the $n$th UE. In addition, we assume that global channel state information (CSI) of all channels is available at the BS\footnote{In fact, we assume that the BS attains the UL CSI by listening to the sounding reference signal transmitted by the UEs and the DL CSI through the channel quality indicator (CQI) feedback from the UEs \cite{nam}.} so as to unveil the performance upper bound of IBFD OFDMA wireless networks \cite{Reverse}. Given the sub-carrier assignment vector $\bold{x}$ and the transmit power vector $\bold{p}$, the UL SINR of {the} $n$th UE at the BS receiver on {the} $k$th sub-carrier denoted by $\gamma^\textnormal{u}_{n,k}$, is obtained {as}
\vspace{-5pt}
\begin{equation}
\label{gamma-UL}
\gamma^\textnormal{u}_{n,k}(\bold{x,p})=\frac{p^\textnormal{u}_{n,k} {x_{n,k}} h_{n,k}}{{\delta_\textnormal{BS} {|l^\textnormal{SI}_\textnormal{BS}|^2}} p^\textnormal{d}_{n,k} {x_{n,k}} + {N_\textnormal{BS}}}.
\end{equation}
Similarly, given the sub-carrier assignment vector $\bold{x}$ and the transmit power vector $\bold{p}$, the DL SINR of {the} $n$th UE on {the} $k$th sub-carrier denoted by $\gamma^\textnormal{d}_{n,k}$, is obtained {as}
\vspace{-2pt}
\begin{equation}
\label{gamma-DL}
\gamma^\textnormal{d}_{n,k}(\bold{x,p})=\frac{p^\textnormal{d}_{n,k} {x_{n,k}} g_{n,k}}{{\delta_n {|l^\textnormal{SI}_n|^2}} p^\textnormal{u}_{n,k} {x_{n,k}} + {N_n}}.
\end{equation}
According to the Shannon formula, the achievable instantaneous UL and DL transmission rate{s} (bps/Hz) for $n$th UE on $k$th sub-carrier are given by 
\begin{equation}
\label{UL-rate}
R^\textnormal{u}_{n,k}(\bold{x,p}) = \log \left(1+\gamma^\textnormal{u}_{n,k}(\bold{x,p})\right)\!,
\end{equation}
and
\vspace{-5pt}
\begin{equation}
\label{DL-rate}
R^\textnormal{d}_{n,k}(\bold{x,p}) = \log \left(1+\gamma^\textnormal{d}_{n,k}(\bold{x,p})\right)\!,
\end{equation}
respectively. 
Let $R^\textnormal{u}_n(\bold{x,p})$ denote the UL transmission rate of {the} $n$th UE, i.e., $R^\textnormal{u}_n(\bold{x,p})=\sum_{\forall k \in \mathcal{K}}{ R^\textnormal{u}_{n,k}(\bold{x,p}) } $ and let $R^\textnormal{d}_n(\bold{x,p})$ denote the DL transmission rate of {the} $n$th UE, i.e., $R^\textnormal{d}_n(\bold{x,p})=\sum_{\forall k \in \mathcal{K}}{ R^\textnormal{d}_{n,k}(\bold{x,p}) }$. The transmission rate of {the} $n$th UE is obtained by $R_n(\bold{x,p})= R^\textnormal{u}_n(\bold{x,p}) + R^\textnormal{d}_n(\bold{x,p})$. The total system sum-rate denoted by $R(\bold{x,p})$ is obtained by $R(\bold{x,p})$ $=\sum_{\forall n \in \mathcal{N}}{R_n(\bold{x,p})}$. 
The total system power consumption denoted by $P^\textnormal{T}(\bold{x},\bold{p})$ is {formed as}
\vspace{-2pt}
\begin{equation}
\label{totalPowerConsumption}
P^\textnormal{T}\!(\bold{x},\bold{p}) \!=\! P_\textnormal{BS}^\textnormal{C} +\!\! \sum_{\forall n \in \mathcal{N}}\!\!\!{P_n^\textnormal{C}} +\!\! \sum_{\forall n \in \mathcal{N}}\!{ \sum_{\forall k \in \mathcal{K}}\!\!{ x_{n,k} \!\left(\! {\frac{1}{\epsilon_n}} p^\textnormal{u}_{n,k} \!+\! {\frac{1}{\epsilon_\textnormal{BS}}} p^\textnormal{d}_{n,k}\!\right) } },
\end{equation}
where {$\epsilon_n \in (0,1)$ and $\epsilon_\textnormal{BS} \in (0,1)$ are the power amplifier efficiency of {the} $n$th UE and the BS, respectively \cite{powerModel}.} $P_\textnormal{BS}^\textnormal{C}$ and $P_n^\textnormal{C}$ are the fixed circuit power consumed by {the} BS and {the} $n$th UE, respectively. 
Indeed, we consider the transmit power of {the} BS and UEs as well as their circuit power consumption. We define the EE criterion as the ratio of the total system sum-rate to the total system power consumption given by 
\begin{equation}
\label{energyEfficiency}
EE(\bold{x},\bold{p})=\frac{R(\bold{x},\bold{p})}{P^\textnormal{T}(\bold{x},\bold{p})}.
\end{equation}

\subsection{Problem Formulation}

We formally state the joint optimization problem of sub-carrier assignment and power control to maximize the EE subject to constraints of the transmit power of both the UEs and BS and the QoS requirement{s} for each UEs at {the} DL and UL, that is:
\label{main problem} 
\begin{align}
\underset{\bold{p},\bold{x}}{\mathrm{maximize}}& \quad EE(\bold{x},\bold{p})  \label{problem}\\
\mathrm{subject\ to} & \quad \sum_{\forall n \in \mathcal{N}}{x_{n,k}} \leq 1, \qquad  \qquad \qquad \forall k \in \mathcal{K} \tag{8-1} \label{exclusive allocation of subcarriers} \\
& \quad \sum_{\forall k \in \mathcal{K}}{p^\textnormal{u}_{n,k} x_{n,k} }  \leq \overline{P}_n, \!  \qquad \qquad \forall n \in \mathcal{N} \tag{8-2} \label{transmit power constraint of users}\\
& \quad \sum_{\forall n \in \mathcal{N}}{ \sum_{\forall k \in \mathcal{K}}{p^\textnormal{d}_{n,k} x_{n,k} } } \leq \overline{P}_\textnormal{BS},  \tag{8-3} \label{transmit power constraint of base stations}\\
& \quad {R^\textnormal{u}_n(\bold{x,p}) \geq \overline{R}^\textnormal{u}_n, \qquad \qquad \qquad  \!\!\!\! \quad \!\!\!  \forall n \in \mathcal{N}} \tag{8-4} \label{target QoS constraint for each user at UL}\\
& \quad {R^\textnormal{d}_n(\bold{x,p}) \geq \overline{R}^\textnormal{d}_n, \qquad \qquad \qquad  \!\!\!\! \quad \!\!\!  \forall n \in \mathcal{N}} \tag{8-5} \label{target QoS constraint for each user at DL}\\
& \quad p^\textnormal{u}_{n,k} , p^\textnormal{d}_{n,k} \geq 0, \qquad \! \qquad  \forall n \in \mathcal{N},k \in \mathcal{K} \tag{8-6} \label{non-negative transmit power}\\
& \quad x_{n,k} \in \{0,1\}, \qquad \qquad   \forall n \in \mathcal{N},k \in \mathcal{K} \tag{8-7}, \label{binary subcarrier and base station assignment} \nonumber
\end{align}
where constraint (\ref{exclusive allocation of subcarriers}) ensures the exclusive sub-carrier assignment in OFDMA system. The feasibility of transmit power of {the} UEs and BS are indicated by constraints (\ref{transmit power constraint of users}) and (\ref{transmit power constraint of base stations}), respectively, in which $\overline{P}_n$ and $\overline{P}_\textnormal{BS}$ are the maximum transmit power of {the} $n$th UE and the BS, respectively. Constraints (\ref{target QoS constraint for each user at UL}) and (\ref{target QoS constraint for each user at DL}) guarantee the QoS requirement for each UEs at the UL and DL, respectively, where the specific required rates of the UEs in the UL and DL need to be satisfied.
Constraint (\ref{non-negative transmit power}) corresponds to the non-negative transmit power for {the} UEs and BS. Finally, constraint (\ref{binary subcarrier and base station assignment}) is an integer constraint for sub-carrier assignment to the UEs.

\section{Our Proposed Resource Allocation Scheme for EE Maximization}
Optimization problem (\ref{problem}) contains a non-convex objective function and non-linear constraints with a combination of binary and continuous variables, i.e., $\bold{x}$ and $\bold{p}$, respectively. Specifically, (\ref{problem}) is a non-convex mixed-integer nonlinear programming (MINLP) optimization problem and so it is hard to solve at {its} original form\footnote{To achieve the globally optimal solution of original problem (\ref{problem}), an exhaustive search is required{, entailing} a complexity of $O(N^K)$ which is computationally infeasible for ${N \gg 1}$ {and} $K \gg 1$.}. Thus we need to propose an efficient algorithm for addressing (\ref{problem}) with reasonable computational complexity. 

To evade the high complexity of the MINLP problem {in} (\ref{problem}), we reformulate it into a more mathematically tractable {one where} its time complexity is polynomial. The first concern to address the problem (\ref{problem}) is the coupled UL and DL power control and sub-carrier assignment variables in both the objective function and constraints. This makes obtaining a solution for problem (\ref{problem}) complicated. To tackle this issue, we define two new auxiliary transmit power variables and redefine the problem based on them. The next concern for addressing the problem is the fractional objective function of EE. 
We employ the fractional programming \cite{Dinkelbach} to transform the fractional objective function to the subtractive form. Then, we employ Dinkelbach algorithm to address the problem in subtractive form of the objective function. In each iteration of Dinkelbach algorithm, an inner optimization problem requires to be solved. To address the inner problem, we first make it convex by applying MM algorithm \cite{Sun}. Next, we treat the integer variables and constraints on the sub-carrier assignment variables which create a disjoint feasible solution set {that} is an obstacle to solve the problem. Using \emph{abstract Lagrangian duality} and introducing a penalty function, we propose a technique to tackle the integer variables issue. Finally, we obtain a convex optimization problem with continuous feasible set which can be solved by using tools for solving convex problems such as CVX \cite{Grant}, \cite{Boyd}.
The details of our proposed method for addressing the optimization problem stated in (\ref{problem}) {will be} explained in what follows.

\subsection{{Tackling the Coupled Variables in Problem (\ref{problem})}}
{As mentioned before, the first step to address problem (\ref{problem}) is to tackle the coupled UL and DL power control and sub-carrier assignment variables in constraints (\ref{transmit power constraint of users})-(\ref{target QoS constraint for each user at DL}) as well as the objective function. To do this, we first define two new auxiliary power variables $\widetilde{p}^\textnormal{u}_{n,k}$ and $\widetilde{p}^\textnormal{d}_{n,k}$ as
\begin{equation}
\label{auxiliary variable}
\widetilde{p}^\textnormal{u}_{n,k}= p^\textnormal{u}_{n,k} x_{n,k}; \quad \widetilde{p}^\textnormal{d}_{n,k}= p^\textnormal{d}_{n,k} x_{n,k} \quad \forall n \in \mathcal{N},k \in \mathcal{K}.
\end{equation}
Then, replacing the original power variables $\bold{p}$ by auxiliary power variables $\displaystyle{\widetilde{\bold{p}}= [\widetilde{\bold{p}}^\textnormal{u},\widetilde{\bold{p}}^\textnormal{d}]^T}$, where $\widetilde{\bold{p}}^\textnormal{u}=[\widetilde{p}^\textnormal{u}_{n,k}]^T_{N\times K}$ and $\widetilde{\bold{p}}^\textnormal{d}=[\widetilde{p}^\textnormal{d}_{n,k}]^T_{N\times K}$, in constraints (\ref{transmit power constraint of users}) and (\ref{transmit power constraint of base stations}), they are reformulated as
\begin{equation}
\qquad \qquad \sum_{\forall k \in \mathcal{K}}{\widetilde{p}^\textnormal{u}_{n,k} } \leq \overline{P}_n, \qquad \qquad \forall n \in \mathcal{N}, \tag{8-2-1} \label{relaxed transmit power constraint of users 1} 
\end{equation}
and
\begin{equation}
\!\!\!\!\!\!\!\!\!\!\!\!\!\!\!\!\!\!\!\!\!\!\!\! \sum_{\forall n \in \mathcal{N}}{ \sum_{\forall k \in \mathcal{K}}{\widetilde{p}^\textnormal{d}_{n,k} } } \leq \overline{P}_\textnormal{BS}, \tag{8-3-1} \label{relaxed transmit power constraint of base station 1} 
\end{equation}
respectively. Now, we add two following constraints:
\begin{equation}
\widetilde{p}^\textnormal{u}_{n,k} \leq x_{n,k} \overline{P}_n, \qquad \qquad \forall n \in \mathcal{N}, k \in \mathcal{K}, \tag{8-2-2} \label{relaxed transmit power constraint of users 2} 
\end{equation}
and
\begin{equation}
\widetilde{p}^\textnormal{d}_{n,k} \leq x_{n,k} \overline{P}_\textnormal{BS}, \qquad \qquad \forall n \in \mathcal{N}, k \in \mathcal{K}. \tag{8-3-2} \label{relaxed transmit power constraint of base station 2} 
\end{equation}
The constraint (\ref{relaxed transmit power constraint of users 2}) ensures that if the sub-carrier $k$ is not assigned to the $n$th UE, i.e. $x_{n,k}=0$, the transmit power of the $n$th UE on the sub-carrier $k$ is also zero, i.e., $\widetilde{p}^\textnormal{u}_{n,k}=0$. Also, if $x_{n,k}=1$, then $\widetilde{p}^\textnormal{u}_{n,k}$ at most can reach $\overline{P}_n$. Similarly, the constraint (\ref{relaxed transmit power constraint of base station 2}) ensures that if the sub-carrier $k$ is not assigned to the BS for transmitting information to the $n$th UE, i.e. $x_{n,k}=0$, the transmit power of the BS on the sub-carrier $k$ is also zero, i.e., $\widetilde{p}^\textnormal{d}_{n,k}=0$. Also, if $x_{n,k}=1$, then $\widetilde{p}^\textnormal{d}_{n,k}$ at most can reach $\overline{P}_\textnormal{BS}$.
In addition, the constraint (\ref{non-negative transmit power}) is rewritten as
\begin{equation}
\label{relaxed non-negative transmit power}
\widetilde{p}^\textnormal{u}_{n,k} , \widetilde{p}^\textnormal{d}_{n,k} \geq 0, \qquad \! \qquad  \forall n \in \mathcal{N},k \in \mathcal{K}. \tag{8-6-1} 
\end{equation}}

{Now, we rewrite constraints (\ref{target QoS constraint for each user at UL}) and (\ref{target QoS constraint for each user at DL}) as
\begin{equation}
\label{target QoS constraint for each user at UL with auxiliary power}
\widetilde{R}^\textnormal{u}_n(\bold{x}, \widetilde{\bold{p}}) \geq \overline{R}^\textnormal{u}_n, \qquad \qquad \qquad \qquad \!\!\! \quad \!\!\!  \forall n \in \mathcal{N}, \tag{8-4-1}
\end{equation} 
and
\begin{equation}
\label{target QoS constraint for each user at DL with auxiliary power}
\widetilde{R}^\textnormal{d}_n(\bold{x}, \widetilde{\bold{p}}) \geq \overline{R}^\textnormal{d}_n, \qquad \qquad \qquad \qquad \!\!\! \quad \!\!\!  \forall n \in \mathcal{N}, \tag{8-5-1}
\end{equation} 
respectively, where $\displaystyle{\widetilde{R}^\textnormal{u}_n(\bold{x}, \widetilde{\bold{p}})=\sum_{\forall k \in \mathcal{K}}{ \log \left(1+\widetilde{\gamma}^\textnormal{u}_{n,k}(\bold{x}, \widetilde{\bold{p}})\right) } }$, in which $\displaystyle{\widetilde{\gamma}^\textnormal{u}_{n,k}(\bold{x}, \widetilde{\bold{p}}) = \frac{\widetilde{p}^\textnormal{u}_{n,k} h_{n,k}}{{\delta_\textnormal{BS} {|l^\textnormal{SI}_\textnormal{BS}|^2}} \widetilde{p}^\textnormal{d}_{n,k} + N_\textnormal{BS}} }$ obtained by replacing $p^\textnormal{u}_{n,k} x_{n,k}$ by  $\widetilde{p}^\textnormal{u}_{n,k}$ and $p^\textnormal{d}_{n,k} x_{n,k}$ by  $\widetilde{p}^\textnormal{d}_{n,k}$ in the UL SINR and the UL transmission rate function in (\ref{gamma-UL}) and (\ref{UL-rate}), respectively. Similarly, $\displaystyle{\widetilde{R}^\textnormal{d}_n(\bold{x}, \widetilde{\bold{p}})}$ is found.}

{Now, we tackle the coupled UL and DL power control and sub-carrier assignment variables in the objective function of problem (\ref{problem}). To do this, we rewrite the total system sum-rate as
\begin{equation}
\label{total rate with auxilirly variables}
\widetilde{R}(\bold{x}, \widetilde{\bold{p}})=\sum_{\forall n \in \mathcal{N}}{\Big(\widetilde{R}^\textnormal{u}_n(\bold{x}, \widetilde{\bold{p}}) + \widetilde{R}^\textnormal{d}_n(\bold{x}, \widetilde{\bold{p}})\Big)},
\end{equation}
where, $\widetilde{R}^\textnormal{u}_n(\bold{x}, \widetilde{\bold{p}})$ and $\widetilde{R}^\textnormal{d}_n(\bold{x}, \widetilde{\bold{p}})$ are described in (\ref{target QoS constraint for each user at UL with auxiliary power}) and (\ref{target QoS constraint for each user at DL with auxiliary power}), respectively.
Also, we rewrite the total system power consumption in (\ref{totalPowerConsumption}) as
\begin{equation}
\label{total power consumption with auxilirly variables}
\widetilde{P}^\textnormal{T}\!(\bold{x}, \widetilde{\bold{p}}) \!=\! P_\textnormal{BS}^\textnormal{C} +\!\! \sum_{\forall n \in \mathcal{N}}\!\!\!{P_n^\textnormal{C}} +\!\! \sum_{\forall n \in \mathcal{N}}\!{ \sum_{\forall k \in \mathcal{K}}\!\!{ \!\left(\! {\frac{1}{\epsilon_n}} \widetilde{p}^\textnormal{u}_{n,k} \!+\! {\frac{1}{\epsilon_\textnormal{BS}}} \widetilde{p}^\textnormal{d}_{n,k}\!\right) } }.
\end{equation}
Finally, we rewrite the EE criterion in (\ref{energyEfficiency}) as
\begin{equation}
\label{energyEfficiency relaxed}
\widetilde{EE}(\bold{x}, \widetilde{\bold{p}})=\frac{\widetilde{R}(\bold{x}, \widetilde{\bold{p}})}{\widetilde{P}^\textnormal{T}(\bold{x}, \widetilde{\bold{p}})}.
\end{equation}}

{After applying the mentioned steps, problem (\ref{problem}) is reformulated as
\begin{align}
\underset{\widetilde{\bold{p}},\bold{x}}{\mathrm{maximize}}& \quad \widetilde{EE}(\bold{x}, \widetilde{\bold{p}})  \label{problem decoupled}\\
\mathrm{subject\ to} & \quad \textnormal{(\ref{exclusive allocation of subcarriers}), (\ref{relaxed transmit power constraint of users 1}), (\ref{relaxed transmit power constraint of users 2}), (\ref{relaxed transmit power constraint of base station 1}), (\ref{relaxed transmit power constraint of base station 2})}, \nonumber \\
& \quad \textnormal{(\ref{target QoS constraint for each user at UL with auxiliary power}), (\ref{target QoS constraint for each user at DL with auxiliary power}), (\ref{relaxed non-negative transmit power}) and (\ref{binary subcarrier and base station assignment})}. \nonumber
\end{align}
The problem (\ref{problem decoupled}) is an optimization problem with decoupled variables. However, it is still non-convex due to the non-convexity of the fractional objective function and constraints (\ref{target QoS constraint for each user at UL with auxiliary power}) and (\ref{target QoS constraint for each user at DL with auxiliary power}) as well as the combinatorial constraint (\ref{binary subcarrier and base station assignment}) on the sub-carrier assignment variables. In the next subsections, we first describe a technique to treat the fractional objective function in problem (\ref{problem decoupled}), then we tackle the non-convexity of constraints (\ref{target QoS constraint for each user at UL with auxiliary power}) and (\ref{target QoS constraint for each user at DL with auxiliary power}) and finally, we propose a technique to handle the integer sub-carrier assignment variables in (\ref{binary subcarrier and base station assignment}).}

\subsection{Transforming the Fractional Objective Function in (\ref{problem decoupled})}
In this section, we tackle the fractional objective function of (\ref{problem decoupled}).
Suppose $\mathcal{F}$ is the set of feasible solutions to problem (\ref{problem decoupled}) spanned by constraints (\ref{exclusive allocation of subcarriers}), (\ref{relaxed transmit power constraint of users 1}), (\ref{relaxed transmit power constraint of users 2}), (\ref{relaxed transmit power constraint of base station 1}), (\ref{relaxed transmit power constraint of base station 2}), (\ref{target QoS constraint for each user at UL with auxiliary power}), (\ref{target QoS constraint for each user at DL with auxiliary power}), (\ref{relaxed non-negative transmit power}) and (\ref{binary subcarrier and base station assignment}). We denote $q^*$ as the optimal EE in problem (\ref{problem decoupled}) represented as
\begin{equation}
q^* = \frac{\widetilde{R}(\bold{x^*},\bold{\widetilde{p}^*})}{\widetilde{P}^\textnormal{T}(\bold{x^*},\bold{\widetilde{p}^*})} = \underset{\bold{\widetilde{p}},\bold{x}\in \mathcal{F}}{\mathrm{maximize}}{ \frac{\widetilde{R}(\bold{x},\bold{\widetilde{p}})}{\widetilde{P}^\textnormal{T}(\bold{x},\bold{\widetilde{p}})} },
\end{equation}
where $\bold{x^*}$ and $\bold{\widetilde{p}^*}$ are the optimal sub-carrier assignment and the optimal transmit power control {for} problem (\ref{problem decoupled}), respectively. Now, we employ the following Theorem borrowed from non-linear fractional programming \cite{Dinkelbach} to address (\ref{problem decoupled}).
\begin{theorem}
\label{fractional programming}
{\cite{Dinkelbach}} The resource allocation policy attains the optimal EE, i.e., $q^*$, if and only if
\begin{align}
\label{equivalent problem}
\underset{\bold{\widetilde{p}},\bold{x} \in \mathcal{F}}{\mathrm{maximize}} \quad { \widetilde{R}(\bold{x,\widetilde{p}})-q^*\widetilde{P}^\textnormal{T}(\bold{x,\widetilde{p}})} \nonumber \\
= \widetilde{R}(\bold{x^*,\widetilde{p}^*})-q^*\widetilde{P}^\textnormal{T}(\bold{x^*,\widetilde{p}^*}) =0 ,
\end{align}
for $\widetilde{R}(\bold{x,\widetilde{p}}) \geq 0$ and $\widetilde{P}^\textnormal{T}(\bold{x,\widetilde{p}}) \geq 0$, where $\bold{x^*}$ and $\bold{\widetilde{p}^*}$ yield the optimal solution {to} problem (\ref{problem decoupled}).
\end{theorem}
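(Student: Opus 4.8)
The plan is to prove the two directions of the equivalence by exploiting the strict positivity of the denominator $\widetilde{P}^\textnormal{T}$, which turns the sign of the subtractive objective $\widetilde{R}-q\widetilde{P}^\textnormal{T}$ into a statement about whether the ratio $\widetilde{R}/\widetilde{P}^\textnormal{T}$ lies above or below $q$. First I would record that $\widetilde{P}^\textnormal{T}(\bold{x},\widetilde{\bold{p}})>0$ at every feasible point, since the fixed circuit power $P_\textnormal{BS}^\textnormal{C}+\sum_{n}P_n^\textnormal{C}$ in (\ref{total power consumption with auxilirly variables}) is strictly positive; this licenses dividing by $\widetilde{P}^\textnormal{T}$ freely. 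It is convenient to introduce the auxiliary function $F(q)\treq \max_{(\bold{x},\widetilde{\bold{p}})\in\mathcal{F}}\{\widetilde{R}(\bold{x},\widetilde{\bold{p}})-q\,\widetilde{P}^\textnormal{T}(\bold{x},\widetilde{\bold{p}})\}$, so that the claim reads: $(\bold{x^*},\widetilde{\bold{p}}^*)$ is EE-optimal if and only if $F(q^*)=0$ with the maximum attained at $(\bold{x^*},\widetilde{\bold{p}}^*)$.

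For necessity, suppose $(\bold{x^*},\widetilde{\bold{p}}^*)$ attains the optimal EE $q^*$. For any feasible $(\bold{x},\widetilde{\bold{p}})$ we have $\widetilde{R}/\widetilde{P}^\textnormal{T}\le q^*$ by optimality, and multiplying through by $\widetilde{P}^\textnormal{T}>0$ yields $\widetilde{R}-q^*\widetilde{P}^\textnormal{T}\le 0$; hence $F(q^*)\le 0$. Evaluating the subtractive objective at the optimizer itself gives $\widetilde{R}(\bold{x^*},\widetilde{\bold{p}}^*)-q^*\widetilde{P}^\textnormal{T}(\bold{x^*},\widetilde{\bold{p}}^*)=0$ directly from the defining relation $q^*=\widetilde{R}(\bold{x^*},\widetilde{\bold{p}}^*)/\widetilde{P}^\textnormal{T}(\bold{x^*},\widetilde{\bold{p}}^*)$. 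Combining the two, the maximum equals $0$ and is attained at $(\bold{x^*},\widetilde{\bold{p}}^*)$, as asserted.

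For sufficiency, suppose some feasible $(\bold{x'},\widetilde{\bold{p}}')$ satisfies $\widetilde{R}(\bold{x'},\widetilde{\bold{p}}')-q^*\widetilde{P}^\textnormal{T}(\bold{x'},\widetilde{\bold{p}}')=F(q^*)=0$. The equality $F(q^*)=0$ means $\widetilde{R}-q^*\widetilde{P}^\textnormal{T}\le 0$ for every feasible point, i.e. $\widetilde{R}/\widetilde{P}^\textnormal{T}\le q^*$ everywhere, so no feasible policy can exceed $q^*$; meanwhile the maximizer $(\bold{x'},\widetilde{\bold{p}}')$ achieves $\widetilde{R}/\widetilde{P}^\textnormal{T}=q^*$. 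Hence $q^*$ is exactly the optimal EE and $(\bold{x'},\widetilde{\bold{p}}')$ is an EE-optimal policy, which closes the equivalence.

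The argument is short precisely because positivity of the denominator makes the sign correspondence immediate; the one point requiring care is exactly this positivity — one must justify $\widetilde{P}^\textnormal{T}>0$ (rather than merely $\widetilde{P}^\textnormal{T}\ge 0$ as written in the hypotheses) before dividing, which I would anchor on the strictly positive circuit-power terms. For completeness, and to motivate the Dinkelbach iteration that follows, I would also note that $F$ is continuous and strictly decreasing in $q$, being a pointwise maximum of affine functions of $q$ each with slope $-\widetilde{P}^\textnormal{T}<0$; this monotonicity is not needed for the stated equivalence but guarantees the uniqueness of the root $q^*$ of $F(q)=0$ and underpins the convergence of the subsequent algorithm.
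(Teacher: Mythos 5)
Your proof is correct and is precisely the classical Dinkelbach fractional-programming argument that the paper itself invokes only by citation (its entire proof reads ``directly obtained from \cite{Dinkelbach}''), so you have simply written out in full the argument the authors defer to. Your care in noting that $\widetilde{P}^\textnormal{T}>0$ strictly (via the constant circuit-power terms) before dividing, and the side remark on the strict monotonicity of $F(q)$, are both sound and consistent with how the result is used in Algorithm \ref{Dinkelbach algorithm}.
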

\vspace{-20pt}
\begin{proof}
The proof is directly obtained {from} \cite{Dinkelbach}. 
\end{proof}

{A necessary and sufficient condition to obtain the optimal resource allocation policy for problem (\ref{problem decoupled}) is described in Theorem \ref{fractional programming}. The above theorem states that for the fractional objective function {of} problem (\ref{problem decoupled}), there is a transformed objective function in subtractive form (i.e., $\underset{\bold{\widetilde{p}},\bold{x} \in \mathcal{F}}{\mathrm{maximize}} \quad  \widetilde{R}(\bold{x,\widetilde{p}})-q^*\widetilde{P}^\textnormal{T}(\bold{x,\widetilde{p}})$), which shares the same resource allocation policy. Therefore, we focus on this transformed objective function in the rest of this paper.}

\subsection{{Iterative Dinkelbach Algorithm to Address Problem (\ref{problem decoupled}) with Transformed Objective Function}}
In this section, we propose an iterative algorithm known as Dinkelbach method \cite{Dinkelbach} for addressing problem (\ref{problem decoupled}) with a transformed objective function. 
The proposed {scheme} is given in Algorithm \ref{Dinkelbach algorithm}.

\begin{algorithm}
\caption{Iterative Resource Allocation Algorithm (Dinkelbach Method)}
\label{Dinkelbach algorithm}
\centering
\begin{algorithmic}[1]
\REQUIRE{$i=0, q_0=0, \Delta>0$} \\
\STATE $\quad$ $i$ : Dinkelbach iteration index  \\
\STATE $\quad$ $q_i$ : Dinkelbach parameter \\
\STATE $\quad$ $\Delta$ : The maximum allowed tolerance for convergence of Dinkelbach Method \\
\WHILE {$q_i - q_{i-1} > \Delta$} 
\STATE {Solve problem (\ref{probi}) using Algorithm \ref{algorithm for solving probi} and obtain resource allocation policy $\{ \bold{\widetilde{p}}^*_i,\bold{x}^*_i \}$} \\
\STATE  Set $i=i+1$ \\
\STATE  Set {$q_i = \displaystyle{\frac{\widetilde{R}(\bold{x}^*_i,\bold{\widetilde{p}}^*_i)}{\widetilde{P}^\textnormal{T}(\bold{x}^*_i,\bold{\widetilde{p}}^*_i)}}$}\\
\ENDWHILE \\
\STATE Set {$\{\bold{\widetilde{p}^*,x^*}\} = \{ \bold{\widetilde{p}}^*_{i-1},\bold{x}^*_{i-1} \}$} \\
\RETURN $\quad  {\bold{\widetilde{p}^*,x^*}}$
\end{algorithmic}
\end{algorithm}

{In accordance with Algorithm \ref{Dinkelbach algorithm}, given parameter $q_i$ in iteration $i$ of Dinkelbach method, the following optimization problem should be solved:
\begin{align}
\underset{\widetilde{\bold{p}}, \bold{x}}{\mathrm{maximize}} & \quad \widetilde{R}(\bold{x,\widetilde{p}})-q_i \widetilde{P}^\textnormal{T}(\bold{x,\widetilde{p}}) \label{probi} \\
\mathrm{subject\ to} & \quad \textnormal{(\ref{exclusive allocation of subcarriers}), (\ref{relaxed transmit power constraint of users 1}), (\ref{relaxed transmit power constraint of users 2}), (\ref{relaxed transmit power constraint of base station 1}), (\ref{relaxed transmit power constraint of base station 2})}, \nonumber \\
& \quad \textnormal{(\ref{target QoS constraint for each user at UL with auxiliary power}), (\ref{target QoS constraint for each user at DL with auxiliary power}), (\ref{relaxed non-negative transmit power}) and (\ref{binary subcarrier and base station assignment})}. \nonumber
\end{align}
Then, $q_i$ is updated by {$q_i = \displaystyle{\frac{\widetilde{R}(\bold{x}^*_i,\bold{\widetilde{p}}^*_i)}{\widetilde{P}^\textnormal{T}(\bold{x}^*_i,\bold{\widetilde{p}}^*_i)}}$}, where {$\{\bold{\widetilde{p}}^*_i, \bold{x}^*_i\}$} is the resource allocation policy corresponding to problem (\ref{probi}). The algorithm is terminated when $q_i$ converges 
and so the solution to problem (\ref{problem decoupled}), i.e., $\{\bold{\widetilde{p}^*,x^*}\}$, is eventually achieved. 
The proposed algorithm converges to the optimal solution of problem (\ref{problem decoupled}), if we are able to solve the inner problem (\ref{probi}) in each iteration. The convergence proof is similar to the proof given in \cite{Dinkelbach}. However, as the inner problem (\ref{probi}) is a non-convex problem, we employ MM approach by constructing a sequence of surrogate functions using Taylor approximation to make it convex. So, the proposed algorithm converges to a locally optimal solution of the problem (\ref{problem decoupled}). Also, our simulation results show that our proposed algorithm closely achieves the globally optimal solution. In the next subsection, we derive the solution to the problem (\ref{probi}).}

\subsection{{Solving the Optimization Problem (\ref{probi})}}
In this section, we obtain our proposed resource allocation policy for addressing problem (\ref{probi}). To do this, we first make the problem (\ref{probi}) convex by applying MM algorithm and then we handle the integer sub-carrier assignment variables by using \emph{abstract Lagrangian duality}.

Although, we addressed the issue of the coupled variables in constraints (\ref{target QoS constraint for each user at UL with auxiliary power}) and (\ref{target QoS constraint for each user at DL with auxiliary power}), they are still non-convex due to the logarithmic rate function. In order to handle this issue, we employ MM algorithm \cite{Sun} to make constraints (\ref{target QoS constraint for each user at UL with auxiliary power}) and (\ref{target QoS constraint for each user at DL with auxiliary power}) convex as explained in what follows. First, by replacing $\widetilde{R}^\textnormal{u}_n(\bold{x}, \widetilde{\bold{p}})$ in (\ref{target QoS constraint for each user at UL with auxiliary power}), we have $\displaystyle{\sum_{\forall k \in \mathcal{K} }{\log \left(1+\frac{\widetilde{p}^\textnormal{u}_{n,k} h_{n,k}}{{\delta_\textnormal{BS} {|l^\textnormal{SI}_\textnormal{BS}|^2}} \widetilde{p}^\textnormal{d}_{n,k} + N_\textnormal{BS}}\right)} \geq \overline{R}^\textnormal{u}_n, \quad \forall n \in \mathcal{N}}$. Rearranging this relation, we rewrite it as
\begin{align}
\label{Rearranged 10} 
&  \sum_{\forall k \in \mathcal{K} }{\log_{}{\left({\delta_\textnormal{BS} {|l^\textnormal{SI}_\textnormal{BS}|^2}} \widetilde{p}^\textnormal{d}_{n,k} + N_\textnormal{BS}+\widetilde{p}^\textnormal{u}_{n,k} h_{n,k} \right) }} \nonumber \\
& -\sum_{\forall k \in \mathcal{K} }{\log_{}{\left( {\delta_\textnormal{BS} {|l^\textnormal{SI}_\textnormal{BS}|^2}} \widetilde{p}^\textnormal{d}_{n,k} + N_\textnormal{BS} \right)}} \geq \overline{R}^\textnormal{u}_n, \quad \forall n \in \mathcal{N}.
\end{align}
We redefine the left side of the above equation as a difference of convex functions (DC) as
\begin{equation}
\label{UL rate constraint in dc form}
f^\textnormal{u}_1(\bold{x}, \widetilde{\bold{p}}) - f^\textnormal{u}_2(\bold{x}, \widetilde{\bold{p}}) \geq \overline{R}^\textnormal{u}_n, \quad \forall n \in \mathcal{N},
\end{equation}
where, $\displaystyle{f^\textnormal{u}_1(\bold{x}, \widetilde{\bold{p}})\!=\!\!\sum_{\forall k \in \mathcal{K} }{\log_{}{\left({\delta_\textnormal{BS} {|l^\textnormal{SI}_\textnormal{BS}|^2}} \widetilde{p}^\textnormal{d}_{n,k} \!+\! N_\textnormal{BS}+\widetilde{p}^\textnormal{u}_{n,k} h_{n,k} \right) }}}$ and $\displaystyle{f^\textnormal{u}_2(\bold{x}, \widetilde{\bold{p}})=\sum_{\forall k \in \mathcal{K} }{\log_{}{\left( {\delta_\textnormal{BS} {|l^\textnormal{SI}_\textnormal{BS}|^2}} \widetilde{p}^\textnormal{d}_{n,k} + N_\textnormal{BS} \right)}}}$. 
Even though both $f^\textnormal{u}_1(\bold{x}, \widetilde{\bold{p}})$ and $f^\textnormal{u}_2(\bold{x}, \widetilde{\bold{p}})$ are concave, the subtraction of two concave functions is not necessarily concave \cite{Sun}. To obtain a concave approximation for the constraint (\ref{UL rate constraint in dc form}), we apply MM algorithm \cite{Sun} and construct a surrogate function for $f^\textnormal{u}_2(\bold{x}, \widetilde{\bold{p}})$ using first order Taylor approximation as
\begin{equation}
\label{taylor approximation for UL rate function}
\widetilde{f}^\textnormal{u}_2(\bold{x}, \widetilde{\bold{p}})=f^\textnormal{u}_2(\bold{x}, \widetilde{\bold{p}}^{(t-1)}) + \nabla_{\widetilde{\bold{p}}}{{f^\textnormal{u}_2}^T(\bold{x}, \widetilde{\bold{p}}^{(t-1)})} (\widetilde{\bold{p}}-\widetilde{\bold{p}}^{(t-1)}),
\end{equation}
where, $\widetilde{\bold{p}}^{(t-1)}$ is the solution of the problem at $(t-1)$th iteration,  and $\nabla_{\widetilde{\bold{p}}}$ is the gradient operation with respect to $\widetilde{\bold{p}}$. Approximation (\ref{taylor approximation for UL rate function}) satisfies the MM principles and makes a tight lower bound of $\widetilde{R}^\textnormal{u}_n(\bold{x}, \widetilde{\bold{p}})$ \cite{Sun}. Now, we have the following constraint which is convex:
\begin{equation}
\label{convex UL rate constraint}
f^\textnormal{u}_1(\bold{x}, \widetilde{\bold{p}}) - \widetilde{f}^\textnormal{u}_2(\bold{x}, \widetilde{\bold{p}}) \geq \overline{R}^\textnormal{u}_n, \quad \forall n \in \mathcal{N}. \tag{8-4-2}
\end{equation}
Similarly, constraint (\ref{target QoS constraint for each user at DL with auxiliary power}) is rewritten as the following convex constraint:
\begin{equation}
\label{convex DL rate constraint}
f^\textnormal{d}_1(\bold{x}, \widetilde{\bold{p}}) - \widetilde{f}^\textnormal{d}_2(\bold{x}, \widetilde{\bold{p}}) \geq \overline{R}^\textnormal{d}_n, \quad \forall n \in \mathcal{N}, \tag{8-5-2}
\end{equation}
where, $f^\textnormal{d}_1(\bold{x}, \widetilde{\bold{p}})$ and $\widetilde{f}^\textnormal{d}_2(\bold{x}, \widetilde{\bold{p}})$ are obtained in a similar way of obtaining $f^\textnormal{u}_1(\bold{x}, \widetilde{\bold{p}})$ and $\widetilde{f}^\textnormal{u}_2(\bold{x}, \widetilde{\bold{p}})$ as explained.

{Now, we treat the non-convex total system sum-rate function, i.e. $\widetilde{R}(\bold{x}, \widetilde{\bold{p}})$ in the objective function of problem (\ref{probi}). Replacing $\widetilde{R}^\textnormal{u}_n(\bold{x}, \widetilde{\bold{p}})$ and $\widetilde{R}^\textnormal{d}_n(\bold{x}, \widetilde{\bold{p}})$ from (\ref{target QoS constraint for each user at UL with auxiliary power}) and (\ref{target QoS constraint for each user at DL with auxiliary power}), respectively, we rewrite (\ref{total rate with auxilirly variables}) as  
\begin{align}
\label{total rate replaced}
& \widetilde{R}(\bold{x}, \widetilde{\bold{p}})=\sum_{\forall n \in \mathcal{N}}{\sum_{\forall k \in \mathcal{K} }}{\Bigg(\log \left(1+\frac{\widetilde{p}^\textnormal{u}_{n,k} h_{n,k}}{{\delta_\textnormal{BS} {|l^\textnormal{SI}_\textnormal{BS}|^2}} \widetilde{p}^\textnormal{d}_{n,k} + N_\textnormal{BS}}\right)}\nonumber \\
& \qquad \quad + \log \left(1+\frac{\widetilde{p}^\textnormal{d}_{n,k} g_{n,k}}{{\delta_n {|l^\textnormal{SI}_n|^2}} \widetilde{p}^\textnormal{u}_{n,k} + N_n}\right)\Bigg).
\end{align}
Rearranging (\ref{total rate replaced}), we rewrite it as
\begin{align}
\label{total rate rearranged}
& \widetilde{R}(\bold{x}, \widetilde{\bold{p}})\!=\!\!\!\sum_{\forall n \in \mathcal{N}}{\sum_{\forall k \in \mathcal{K} }}{\Bigg(\log \left({\delta_\textnormal{BS} {|l^\textnormal{SI}_\textnormal{BS}|^2}} \widetilde{p}^\textnormal{d}_{n,k} \!+\! N_\textnormal{BS}+\widetilde{p}^\textnormal{u}_{n,k} h_{n,k} \right)}  \nonumber \\
& \quad + \log \left({\delta_n {|l^\textnormal{SI}_n|^2}} \widetilde{p}^\textnormal{u}_{n,k} + N_n + \widetilde{p}^\textnormal{d}_{n,k} g_{n,k} \right) \nonumber \\
& \quad \!-\! \log \!\left({\delta_\textnormal{BS} {|l^\textnormal{SI}_\textnormal{BS}|^2}} \widetilde{p}^\textnormal{d}_{n,k} \!+\!\! N_\textnormal{BS} \right) \!-\! \log \!\left({\delta_n {|l^\textnormal{SI}_n|^2}} \widetilde{p}^\textnormal{u}_{n,k} \!+\!\! N_n \right)\!\!\!\Bigg).
\end{align} 
We redefine the above equation as a DC as
\begin{equation}
\label{total rate in dc form}
\widetilde{R}(\bold{x}, \widetilde{\bold{p}})= f(\bold{x}, \widetilde{\bold{p}})- g(\bold{x}, \widetilde{\bold{p}}),
\end{equation}
where, $\displaystyle{f(\bold{x}, \widetilde{\bold{p}})\!=\!\!\!\!\sum_{\forall n \in \mathcal{N}}{\sum_{\forall k \in \mathcal{K} }}\!\!\Big(\!\log ({\delta_\textnormal{BS} {|l^\textnormal{SI}_\textnormal{BS}|^2}} \widetilde{p}^\textnormal{d}_{n,k} \!\!+\!\!  N_\textnormal{BS}\!\!+\!\!\widetilde{p}^\textnormal{u}_{n,k} h_{n,k} )}$ $\displaystyle{\!+\!\log \left({\delta_n {|l^\textnormal{SI}_n|^2}} \widetilde{p}^\textnormal{u}_{n,k} \!+\! N_n + \widetilde{p}^\textnormal{d}_{n,k} g_{n,k} \right)\Big)}$, and $\displaystyle{g(\bold{x}, \widetilde{\bold{p}})=} $ $\displaystyle{\sum_{\forall n \in \mathcal{N}}\!{\sum_{\forall k \in \mathcal{K} }} \!\!\Big(\!\log ({\delta_\textnormal{BS} {|l^\textnormal{SI}_\textnormal{BS}|^2}} \widetilde{p}^\textnormal{d}_{n,k} \!\!+\!\! N_\textnormal{BS} )\!+\!\log \left({\delta_n {|l^\textnormal{SI}_n|^2}} \widetilde{p}^\textnormal{u}_{n,k} \!\!+\!\! N_n \right)\Big)}$. 
After applying the mentioned steps, the problem (\ref{probi}) is reformulated as\footnote{{The objective function of problem (\ref{probi relaxed}) is still non-concave which later becomes concave using MM algorithm in (\ref{taylor approximation}).}}
\begin{align}
\underset{\widetilde{\bold{p}}, \bold{x}}{\mathrm{maximize}} & \quad f(\bold{x}, \widetilde{\bold{p}})- g(\bold{x}, \widetilde{\bold{p}})-q_i \widetilde{P}^\textnormal{T}(\bold{x,\widetilde{p}}) \label{probi relaxed} \\
\mathrm{subject\ to} & \quad \textnormal{(\ref{exclusive allocation of subcarriers}), (\ref{relaxed transmit power constraint of users 1}), (\ref{relaxed transmit power constraint of users 2}), (\ref{relaxed transmit power constraint of base station 1}), (\ref{relaxed transmit power constraint of base station 2})}, \nonumber \\
& \quad \textnormal{(\ref{convex UL rate constraint}), (\ref{convex DL rate constraint}), (\ref{relaxed non-negative transmit power}) and (\ref{binary subcarrier and base station assignment})}. \nonumber
\end{align}}

Next, we handle the issue of incorporating integer variables on the objective function as well as constraints. To do this, we first relax the sub-carrier assignment variable $x_{n,k}$ to be a real value between zero and one alternative to a binary value. So, the constraint (\ref{binary subcarrier and base station assignment}) is rewritten as
\begin{equation}
\label{relaxed binary subcarrier assignment}
x_{n,k} \in [0,1], \qquad \quad \forall n \in \mathcal{N}, k \in \mathcal{K}. \tag{8-7-1} \nonumber
\end{equation}
Now, inspired by the approach in \cite{Che}, we force the relaxed sub-carrier assignment variables to take binary values by defining a new constraint as 
\begin{equation}
\label{new subcarrier assignment constraint}
\sum_{\forall n \in \mathcal{N}}{\sum_{\forall k \in \mathcal{K}}{\left( x_{n,k} - (x_{n,k})^2 \right)}} \leq 0. \nonumber \tag{8-8}
\end{equation}
This constraint is satisfied for binary values, i.e., $x_{n,k} \in \{0,1\}$. Adding the new constraint (\ref{new subcarrier assignment constraint}) to the problem (\ref{probi relaxed}), the new optimization problem is:
\begin{align}
\underset{\widetilde{\bold{p}}, \bold{x}}{\mathrm{maximize}} & \quad f(\bold{x}, \widetilde{\bold{p}})- g(\bold{x}, \widetilde{\bold{p}})-q_i \widetilde{P}^\textnormal{T}(\bold{x,\widetilde{p}}) \label{new problem} \\
\mathrm{subject\ to} & \quad {\textnormal{(\ref{exclusive allocation of subcarriers}), (\ref{relaxed transmit power constraint of users 1}), (\ref{relaxed transmit power constraint of users 2}), (\ref{relaxed transmit power constraint of base station 1}), (\ref{relaxed transmit power constraint of base station 2})}}, \nonumber \\
& \quad {\textnormal{(\ref{convex UL rate constraint}), (\ref{convex DL rate constraint}), (\ref{relaxed non-negative transmit power}), (\ref{relaxed binary subcarrier assignment}) and (\ref{new subcarrier assignment constraint})}}. \nonumber
\end{align}
However, the constraint (\ref{new subcarrier assignment constraint}) is not convex. In order to treat the non-convexity of (\ref{new subcarrier assignment constraint}), using the \emph{abstract Lagrangian duality}, we add the constraint (\ref{new subcarrier assignment constraint}) as a penalty term to the objective function of problem (\ref{new problem}). More specifically, the abstract Lagrangian function of problem (\ref{new problem}) with only one Lagrangian multiplier to handle the non-convex constraint (\ref{new subcarrier assignment constraint}) is given by 
\begin{align}
\label{lagrangian function}
\bold{L}(\mathbf{x},\widetilde{\bold{p}})\!=\! f(\bold{x}, \widetilde{\bold{p}})- g(\bold{x}, \widetilde{\bold{p}})-q_i \widetilde{P}^\textnormal{T}(\bold{x,\widetilde{p}}) \nonumber \\
- \lambda \sum_{\forall n \in \mathcal{N}}\!{\sum_{\forall k \in \mathcal{K}}{\!\!\left( x_{n,k} - (x_{n,k})^2 \right)}},
\end{align}
where $\lambda$ acts as a penalty factor to penalize the objective function when the value of $x_{n,k}$ is not binary. Letting the feasible set spanned by constraints (\ref{exclusive allocation of subcarriers}), (\ref{relaxed transmit power constraint of users 1}), (\ref{relaxed transmit power constraint of users 2}), (\ref{relaxed transmit power constraint of base station 1}), (\ref{relaxed transmit power constraint of base station 2}), (\ref{convex UL rate constraint}), (\ref{convex DL rate constraint}), (\ref{relaxed non-negative transmit power}) and (\ref{relaxed binary subcarrier assignment}) be denoted by $\mathcal{D}$, the problem (\ref{new problem}) is expressed by $\displaystyle{\max_{(\widetilde{\bold{p}},\bold{x}) \in \mathcal{D}}{\min_{\lambda\geq0}{\bold{L}(\mathbf{x},\widetilde{\bold{p}})}}}$, while its dual problem is $\displaystyle{\min_{\lambda\geq0}{\max_{(\widetilde{\bold{p}},\bold{x}) \in \mathcal{D}}{\bold{L}(\mathbf{x},\widetilde{\bold{p}})}}}$. Note that, in general, there is a duality gap, i.e., $\displaystyle{\max_{(\widetilde{\bold{p}},\bold{x}) \in \mathcal{D}}{\min_{\lambda\geq0}{\bold{L}(\mathbf{x},\widetilde{\bold{p}})}} \leq \min_{\lambda\geq0}{\max_{(\widetilde{\bold{p}},\bold{x}) \in \mathcal{D}}{\bold{L}(\mathbf{x},\widetilde{\bold{p}})}}}$ \cite{Ata}, \cite{Che}. 

{\begin{proposition}
\label{proposition1} 
For adequately large values of penalty factor, i.e., $\lambda$, the optimization problem (\ref{new problem}) is equivalent to the following problem:
\begin{align}
\underset{\widetilde{\bold{p}},\bold{x}}{\mathrm{maximize}}& \quad \bold{L}(\mathbf{x},\widetilde{\bold{p}}) \label{new problem with penalty term} \\
\mathrm{subject\ to} & \quad \widetilde{\bold{p}},\bold{x} \in \mathcal{D}. \nonumber
\end{align}
\end{proposition}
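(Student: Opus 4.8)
The plan is to prove Proposition \ref{proposition1} as an \emph{exact penalty} result for the abstract Lagrangian $\mathbf{L}(\mathbf{x},\widetilde{\mathbf{p}})$ in (\ref{lagrangian function}). Throughout, write $F(\mathbf{x},\widetilde{\mathbf{p}}) = f(\mathbf{x},\widetilde{\mathbf{p}}) - g(\mathbf{x},\widetilde{\mathbf{p}}) - q_i\widetilde{P}^\textnormal{T}(\mathbf{x},\widetilde{\mathbf{p}})$ for the objective and $\Phi(\mathbf{x}) = \sum_{\forall n\in\mathcal{N}}\sum_{\forall k\in\mathcal{K}}\big(x_{n,k}-(x_{n,k})^2\big)$ for the penalty, so that $\mathbf{L} = F - \lambda\Phi$. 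The first step is to record the elementary fact that $x-x^2\ge 0$ for every $x\in[0,1]$, hence $\Phi(\mathbf{x})\ge 0$ on $\mathcal{D}$ with $\Phi(\mathbf{x})=0$ if and only if each $x_{n,k}\in\{0,1\}$. Consequently the constraint (\ref{new subcarrier assignment constraint}), namely $\Phi(\mathbf{x})\le 0$, is on $\mathcal{D}$ equivalent to $\Phi(\mathbf{x})=0$, and problem (\ref{new problem}) is exactly $\max\{F(\mathbf{x},\widetilde{\mathbf{p}}) : (\mathbf{x},\widetilde{\mathbf{p}})\in\mathcal{D},\ \Phi(\mathbf{x})=0\}$, whose feasible set is nonempty (any admissible binary assignment together with a feasible power vector lies in it) and compact (a closed, bounded subset of $\mathbb{R}^{NK\times 1}\times\mathbb{R}^{2NK\times 1}$, boundedness coming from (\ref{relaxed transmit power constraint of users 2}), (\ref{relaxed transmit power constraint of base station 2}) and (\ref{relaxed binary subcarrier assignment})).

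Next I would connect the two one-sided bounds. For the inner minimization, note that $\min_{\lambda\ge 0}\mathbf{L}(\mathbf{x},\widetilde{\mathbf{p}})$ equals $F(\mathbf{x},\widetilde{\mathbf{p}})$ when $\Phi(\mathbf{x})=0$ and $-\infty$ when $\Phi(\mathbf{x})>0$ (driving $\lambda\to\infty$), these being the only two cases on $\mathcal{D}$; hence $\max_{(\mathbf{x},\widetilde{\mathbf{p}})\in\mathcal{D}}\min_{\lambda\ge 0}\mathbf{L}$ reproduces problem (\ref{new problem}) verbatim, and I denote its value $v^\star$. On the other side, for every fixed $\lambda\ge 0$ the penalized value $\max_{(\mathbf{x},\widetilde{\mathbf{p}})\in\mathcal{D}}\mathbf{L}$ is non-increasing in $\lambda$ (since $\Phi\ge 0$ on $\mathcal{D}$) and dominates $v^\star$, because any optimal binary point of (\ref{new problem}) has $\Phi=0$ and remains admissible in (\ref{new problem with penalty term}). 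It therefore remains to show that the reverse inequality is attained at a finite $\lambda$.

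For the decisive step, let $(\mathbf{x}^\star_\lambda,\widetilde{\mathbf{p}}^\star_\lambda)$ maximize $\mathbf{L}(\cdot\,;\lambda)$ over $\mathcal{D}$. Comparing with an optimal binary point of (\ref{new problem}) gives $F(\mathbf{x}^\star_\lambda,\widetilde{\mathbf{p}}^\star_\lambda)-\lambda\Phi(\mathbf{x}^\star_\lambda)\ge v^\star$, whence $\lambda\Phi(\mathbf{x}^\star_\lambda)\le F(\mathbf{x}^\star_\lambda,\widetilde{\mathbf{p}}^\star_\lambda)-v^\star\le M-v^\star$, where $M=\max_{\mathcal{D}}F<\infty$ by continuity of $F$ on the compact $\mathcal{D}$; thus $\Phi(\mathbf{x}^\star_\lambda)\le (M-v^\star)/\lambda\to 0$. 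To upgrade this asymptotic feasibility to \emph{exact} feasibility at a finite threshold, I would invoke a local error bound: since $x-x^2$ grows at least linearly in $\min\{x,1-x\}$, there is a constant $c>0$ with $\mathrm{dist}\big((\mathbf{x},\widetilde{\mathbf{p}}),\,\{\Phi=0\}\cap\mathcal{D}\big)\le c\,\Phi(\mathbf{x})$ near the feasible set, while $F$ is Lipschitz on $\mathcal{D}$ with some constant $L_F$; combining these shows that once $\lambda>\lambda^\star\treq c\,L_F$ any non-binary candidate is strictly dominated by a nearby feasible binary point, so the maximizer satisfies $\Phi(\mathbf{x}^\star_\lambda)=0$ and is therefore optimal for (\ref{new problem}). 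Equivalently, one may cite the abstract Lagrangian duality theorem of \cite{Ata}, \cite{Che}, whose hypotheses (continuity of $F$ and $\Phi$, compact $\mathcal{D}$, and a nonnegative penalty vanishing exactly on the feasible set) were verified above, to obtain zero duality gap and a finite $\lambda^\star$ directly.

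The main obstacle is precisely this last passage from $\Phi(\mathbf{x}^\star_\lambda)\to 0$ to $\Phi(\mathbf{x}^\star_\lambda)=0$ at a \emph{finite} penalty: asymptotic feasibility is cheap, but exactness requires controlling how much objective gain can be traded against a vanishing constraint violation, and the rounding implicit in the error bound must keep the point inside $\mathcal{D}$ (so that the coupled constraints (\ref{exclusive allocation of subcarriers}), (\ref{convex UL rate constraint}) and (\ref{convex DL rate constraint}) are preserved), which is where the calmness of the feasible set really enters. I would also flag that $\Phi$ is concave, so $-\lambda\Phi$ is convex and the penalized objective in (\ref{new problem with penalty term}) is not automatically concave; the present claim concerns only the equality of optimizers and optimal values, while the separate matter of actually solving (\ref{new problem with penalty term}) is handled by the MM convexification used elsewhere.
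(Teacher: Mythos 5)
Your proposal follows the same skeleton as the paper's proof: both rest on abstract Lagrangian duality for $\bold{L}$ in (\ref{lagrangian function}), the observation that the penalty $\sum_{n,k}\big(x_{n,k}-(x_{n,k})^2\big)$ is nonnegative on $\mathcal{D}$ and vanishes exactly at binary assignments (so (\ref{new subcarrier assignment constraint}) reduces to an equality), and weak duality between the max--min form of (\ref{new problem}) and the penalized problem (\ref{new problem with penalty term}). The divergence is in the decisive step. The paper finishes with a two-case argument on the dual function $\theta(\lambda)=\max_{(\widetilde{\bold{p}},\bold{x})\in\mathcal{D}}\bold{L}$: if the optimizer is binary, monotonicity of $\theta$ in $\lambda$ gives $d^{*}=\theta(\lambda)=p^{*}$ for every $\lambda\ge\lambda^{*}$; if not, $\theta(\lambda^{*})$ would be driven to $-\infty$, contradicting the lower bound furnished by any binary feasible point. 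You instead derive the quantitative estimate $\Phi(\mathbf{x}^\star_\lambda)\le (M-v^\star)/\lambda$ and attempt to upgrade this asymptotic feasibility to exact feasibility at a \emph{finite} $\lambda$ via a local error bound and Lipschitz continuity of the objective. Your route is more explicit about the one genuinely delicate issue (why a finite penalty suffices), which the paper's contradiction argument passes over quickly, but it is incomplete at precisely the point you flag: the error bound must measure distance to $\{\Phi=0\}\cap\mathcal{D}$, and rounding a fractional $x_{n,k}$ up can violate the exclusivity constraint (\ref{exclusive allocation of subcarriers}), while rounding it down forces $\widetilde{p}^\textnormal{u}_{n,k}=\widetilde{p}^\textnormal{d}_{n,k}=0$ through (\ref{relaxed transmit power constraint of users 2}) and (\ref{relaxed transmit power constraint of base station 2}) and can then break the QoS constraints (\ref{convex UL rate constraint})--(\ref{convex DL rate constraint}). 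Neither you nor the paper actually establishes this calmness property; your fallback of citing \cite{Ata}, \cite{Che} is in substance what the paper does, so the two proofs end up leaning on the same external exact-penalty result, with yours supplying a cleaner quantitative intermediate step and an honest statement of where the remaining work lies.
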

\begin{proof}
The proof is given in the Appendix \ref{appendix A}.
\end{proof}}

{However, the objective function of problem (\ref{new problem with penalty term}) is still non-concave. In order to handle this issue, we apply MM algorithm. To do this, we first redefine the objective function of (\ref{new problem with penalty term}) as 
\begin{equation}
\label{objective function in dc form}
\bold{L}(\mathbf{x},\widetilde{\bold{p}})=e_1(\bold{x},\widetilde{\bold{p}}) - e_2(\bold{x},\widetilde{\bold{p}}),
\end{equation}
where, $\displaystyle{e_1(\bold{x},\widetilde{\bold{p}})\!=\!f(\bold{x,\widetilde{p}})-q_i \widetilde{P}^\textnormal{T}(\bold{x,\widetilde{p}}) \!-\! \lambda \sum_{\forall n \in \mathcal{N}}{\sum_{\forall k \in \mathcal{K}}{x_{n,k}}}}$ and $\displaystyle{e_2(\bold{x},\widetilde{\bold{p}})=g(\bold{x,\widetilde{p}})-\lambda \sum_{\forall n \in \mathcal{N}}{\sum_{\forall k \in \mathcal{K}}{ (x_{n,k})^2}}}$. In fact, although both $e_1(\bold{x},\widetilde{\bold{p}})$ and $e_2(\bold{x}, \widetilde{\bold{p}})$ are concave, the subtraction of them is not necessarily concave \cite{Sun}. It is obvious that the objective function of (\ref{objective function in dc form}) belongs to the class of DC programming. To obtain a concave approximation for the objective function in (\ref{objective function in dc form}), we apply MM approach \cite{Sun} and construct a sequence of surrogate functions for $e_2(\bold{x}, \widetilde{\bold{p}})$ using first order Taylor approximation as}
{
\begin{align}
\label{taylor approximation}
\widetilde{e}_2\big(\bold{x}\!,\!\widetilde{\bold{p}}\big)\!\!=&e_2\big(\bold{x}^{(t-1)}\!,\!\widetilde{\bold{p}}^{(t-1)}\big)\!\!+\!\! \nabla_{\widetilde{\bold{p}}}{e_2^T\!\big(\bold{x}^{(t-1)}\!,\!\widetilde{\bold{p}}^{(t-1)}\big)} \!\big(\widetilde{\bold{p}}\!-\!\widetilde{\bold{p}}^{(t-1)}\big)\nonumber\\
&+\nabla_{\bold{x}}e_2^T\big(\bold{x}^{(t-1)}, \widetilde{\bold{p}}^{(t-1)}\big)\big(\bold{x}-\bold{x}^{(t-1)}\big),
\end{align}
where, $\widetilde{\bold{p}}^{(t-1)}$ and $\bold{x}^{(t-1)}$ are the solution of the problem at $(t-1)$th iteration and $\nabla_{\widetilde{\bold{p}}}$  and $\nabla_{\bold{x}}$ present the gradient operation with respect to $\widetilde{\bold{p}}$ and $\bold{x}$,~respectively. Since $e_2(\bold{x}, \widetilde{\bold{p}})$ is a concave function, due to the first order condition \cite{Boyd}, we have $\widetilde{e}_2(\bold{x}, \widetilde{\bold{p}}) \geq e_2(\bold{x}, \widetilde{\bold{p}})$ which shows that $\widetilde{e}_2(\bold{x}, \widetilde{\bold{p}})$ makes a tight lower bound of $\bold{L}(\bold{x},\widetilde{\bold{p}})$  \cite{Sun} as summarized in the following Proposition.}

{\begin{proposition}
\label{proposition2}
The approximation (\ref{taylor approximation}) satisfies MM principles and makes a tight lower bound of $\bold{L}(\mathbf{x},\widetilde{\bold{p}})$ which results in a sequence of improved solutions for problem (\ref{new problem with penalty term}) and yields a locally optimal solution\footnote{Achieving the globally optimal solution is not guaranteed.}. 
\end{proposition}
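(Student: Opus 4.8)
The plan is to isolate the linearized objective
$\widetilde{\bold{L}}(\bold{x},\widetilde{\bold{p}}) = e_1(\bold{x},\widetilde{\bold{p}}) - \widetilde{e}_2(\bold{x},\widetilde{\bold{p}})$,
obtained by replacing $e_2$ in (\ref{objective function in dc form}) with its first-order expansion $\widetilde{e}_2$ from (\ref{taylor approximation}), and to verify that it is a valid \emph{minorizer} of $\bold{L}$ in the sense demanded by the MM framework \cite{Sun}. Concretely, I would check the two defining properties at the current iterate $(\bold{x}^{(t-1)},\widetilde{\bold{p}}^{(t-1)})$: (i) global domination $\widetilde{\bold{L}}(\bold{x},\widetilde{\bold{p}}) \leq \bold{L}(\bold{x},\widetilde{\bold{p}})$ on $\mathcal{D}$, and (ii) tangency, meaning equality of both the value and the gradient of $\widetilde{\bold{L}}$ and $\bold{L}$ at $(\bold{x}^{(t-1)},\widetilde{\bold{p}}^{(t-1)})$.

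For (i) I would invoke the concavity of $e_2$ noted after (\ref{objective function in dc form}): by the first-order condition for concave functions \cite{Boyd}, the linearization (\ref{taylor approximation}) lies everywhere above the function, $\widetilde{e}_2(\bold{x},\widetilde{\bold{p}}) \geq e_2(\bold{x},\widetilde{\bold{p}})$, so subtracting it yields $\widetilde{\bold{L}} = e_1 - \widetilde{e}_2 \leq e_1 - e_2 = \bold{L}$. For (ii), evaluating (\ref{taylor approximation}) at the expansion point gives $\widetilde{e}_2(\bold{x}^{(t-1)},\widetilde{\bold{p}}^{(t-1)}) = e_2(\bold{x}^{(t-1)},\widetilde{\bold{p}}^{(t-1)})$, and since the added correction terms are linear, $\nabla \widetilde{e}_2 = \nabla e_2$ at that point; hence $\widetilde{\bold{L}}$ and $\bold{L}$ agree in value and gradient there. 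Properties (i)--(ii) are precisely the minorization conditions, which establishes the first assertion that (\ref{taylor approximation}) produces a tight lower bound of $\bold{L}$.

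Next I would establish the monotone improvement. Let $(\bold{x}^{(t)},\widetilde{\bold{p}}^{(t)})$ denote the maximizer of the now-concave surrogate over the convex set $\mathcal{D}$. Chaining the minorizer properties yields the standard MM sandwich
\begin{align}
\bold{L}(\bold{x}^{(t)},\widetilde{\bold{p}}^{(t)}) &\geq \widetilde{\bold{L}}(\bold{x}^{(t)},\widetilde{\bold{p}}^{(t)}) \geq \widetilde{\bold{L}}(\bold{x}^{(t-1)},\widetilde{\bold{p}}^{(t-1)}) \nonumber \\
&= \bold{L}(\bold{x}^{(t-1)},\widetilde{\bold{p}}^{(t-1)}),
\end{align}
where the first inequality is (i), the middle one is optimality of $(\bold{x}^{(t)},\widetilde{\bold{p}}^{(t)})$ for the surrogate, and the equality is (ii). Thus $\{\bold{L}(\bold{x}^{(t)},\widetilde{\bold{p}}^{(t)})\}_t$ is non-decreasing; since $\mathcal{D}$ is compact, being closed and bounded through the power budgets (\ref{relaxed transmit power constraint of users 1}), (\ref{relaxed transmit power constraint of base station 1}) and the box (\ref{relaxed binary subcarrier assignment}), and $\bold{L}$ is continuous, the sequence is bounded above and therefore converges, giving the claimed sequence of improved solutions for (\ref{new problem with penalty term}).

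Finally, to upgrade objective convergence to a \emph{locally optimal} (stationary) solution, I would appeal to the convergence theory of MM \cite{Sun}: because the value and gradient of the surrogate coincide with those of $\bold{L}$ at every iterate, any limit point $(\bold{x}^{\star},\widetilde{\bold{p}}^{\star})$ of $\{(\bold{x}^{(t)},\widetilde{\bold{p}}^{(t)})\}$ satisfies the KKT conditions of (\ref{new problem with penalty term}), i.e. it is a stationary point, which for this constrained program is a locally optimal solution. I expect the main obstacle to lie in this last step rather than in the minorization bookkeeping: passing from monotone convergence of the objective to stationarity of the iterates requires the gradient-consistency part of (ii) together with continuity of the gradients on the compact $\mathcal{D}$, and one must also account for the fact that the QoS constraints (\ref{convex UL rate constraint}) and (\ref{convex DL rate constraint}) are themselves MM surrogates. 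Hence the argument must strictly be applied to the \emph{joint} surrogate of objective and constraints, with care taken that the linearized feasible set remains an inner approximation of the true one so that feasibility, and therefore the KKT characterization, is preserved in the limit.
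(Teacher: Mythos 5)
Your proposal is correct and follows essentially the same route as the paper's own proof: both rest on the concavity of $e_2$ so that its first-order Taylor expansion is a global over-estimator (super-gradient inequality), giving $e_1-\widetilde{e}_2 \leq e_1-e_2$ with equality at the expansion point, and both then chain the standard MM sandwich to obtain a monotonically non-decreasing objective sequence. You are in fact somewhat more careful than the paper at the final step — the paper simply asserts that monotone improvement ``yields a locally optimal solution,'' whereas you correctly flag that passing to stationarity requires gradient consistency, compactness of $\mathcal{D}$, and attention to the fact that the QoS constraints (\ref{convex UL rate constraint}) and (\ref{convex DL rate constraint}) are themselves inner-approximating surrogates.
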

\begin{proof}
The proof is given in the Appendix \ref{appendix B}.
\end{proof}}

{Now, the objective function $\bold{L}(\mathbf{x},\widetilde{\bold{p}})=e_1(\bold{x}, \widetilde{\bold{p}}) - \widetilde{e}_2(\bold{x}, \widetilde{\bold{p}})$ is a concave function at each iteration. Finally, we obtain the following problem:
\begin{align}
\label{convex problem}
\underset{\widetilde{\bold{p}},\bold{x}}{\mathrm{maximize}}& \quad e_1(\bold{x}, \widetilde{\bold{p}}) - \widetilde{e}_2(\bold{x}, \widetilde{\bold{p}}) \\
\mathrm{subject\ to} & \quad \widetilde{\bold{p}},\bold{x} \in \mathcal{D}. \nonumber
\end{align}}

The problem in (\ref{convex problem}) is a convex optimization problem at each iteration,~it can be solved efficiently using the optimization package including interior point method such as CVX \cite{Grant}, \cite{Boyd}.
Therefore, we apply an iterative algorithm to tighten the
obtained lower bound where the solution of (\ref{convex problem}) in iteration $(t)$ is used as the initial point for the next iteration $(t+1)$. This iterative algorithm continues until reaches to a locally optimum point\footnote{{The proof is given in the Appendix \ref{appendix B}.}} of problem (\ref{new problem with penalty term}) or equivalently problem (\ref{new problem}) in a polynomial time complexity \cite{Reverse}, \cite{Ata}. The detailed scheme is provided in Algorithm \ref{algorithm for solving probi}.

\begin{algorithm}
\caption{{Iterative Algorithm to Solve Inner Problem (MM Method)}}
\label{algorithm for solving probi}
\centering
\begin{algorithmic}[1]
\REQUIRE{{$i, q_i, t=0, T_{\max}, \lambda, \widetilde{\bf{p}}^{(0)}, {\bf{x}}^{(0)}$}} \\
\STATE {$\quad$ $i$ : Dinkelbach iteration index}  \\
\STATE {$\quad$ $q_i$ : Dinkelbach parameter}  \\
\STATE {$\quad$ $t$ : MM iteration index}  \\
\STATE {$\quad$ $T_{\max}$ : Maximum number of iterations}  \\
\STATE {$\quad$ $\lambda>>1$ : Penalty factor} \\
\STATE {$\quad$ $\widetilde{\bf{p}}^{(0)}, {\bf{x}}^{(0)}$ : Initial value of $\widetilde{\bf{p}}$ and $\bf{x}$ in iteration 0} \\
\REPEAT 
\STATE {Calculate $f^\textnormal{u}_2(\bold{x}, \widetilde{\bold{p}}^{(t)})$, $\nabla_{\widetilde{\bold{p}}}{{f^\textnormal{u}_2}^T(\bold{x}, \widetilde{\bold{p}}^{(t)})}$, $f^\textnormal{d}_2(\bold{x}, \widetilde{\bold{p}}^{(t)})$, $\nabla_{\widetilde{\bold{p}}}{{f^\textnormal{d}_2}^T(\bold{x}, \widetilde{\bold{p}}^{(t)})}$, $e_2(\bold{x}^{(t)}, \widetilde{\bold{p}}^{(t)})$, $\nabla_{\widetilde{\bold{p}}}{e_2^T(\bold{x}^{(t)}, \widetilde{\bold{p}}^{(t)})}$ and $\nabla_{\bold{x}}{e_2^T(\bold{x}^{(t)}, \widetilde{\bold{p}}^{(t)})}$}
\STATE  {Set $t=t+1$} \\
\STATE {Solve problem (\ref{convex problem}) using CVX and obtain resource allocation policy $\{ \bold{\widetilde{p}}^{*(t)},\bold{x}^{*(t)} \}$} \\
\UNTIL {Convergence or $t=T_{\max}$} \\
\STATE {Set $\{\bold{\widetilde{p}}^*_i,\bold{x}^*_i\} = \{ \bold{\widetilde{p}}^{*(t)},\bold{x}^{*(t)} \}$} \\
\RETURN {$\quad  {\bold{\widetilde{p}}^*_i,\bold{x}^*_i}$}
\end{algorithmic}
\end{algorithm}

\subsubsection{Feasibility and Initial Feasible Allocation}\hspace*{\fill} \\ 
The optimization problem (\ref{convex problem}) is feasible since there exists an initial point, i.e., $\widetilde{\bf{p}}^{(0)}, {\bf{x}}^{(0)}$, which holds all of the constraints in $\mathcal{D}$. However, the iterative algorithm \ref{algorithm for solving probi} needs to find an appropriate initialization vector for the sub-carrier assignment and power control. The difficulty lies in how the initial point meets the QoS constraints (\ref{convex UL rate constraint}) and (\ref{convex DL rate constraint}). To meet these constraints, we assume that the UEs and BS have complete SIC capability, and so, there exists only noise in the system. 
For the considered system, we obtain the initial point ${\bf{x}}^{(0)}$ and $\widetilde{\bf{p}}^{(0)}$ in two steps. In step one, we assume that each sub-carrier is assigned to a UE with highest channel gain which gives the initial sub-carrier assignment, i.e., ${\bf{x}}^{(0)}$. In step two, for the initial sub-carrier assignment policy ${\bf{x}}^{(0)}$ obtained in step one, we find the initial power control policy, i.e., $\widetilde{\bf{p}}^{(0)}$, obtained by solving the classic water-filling problem using CVX.

\subsubsection{Computational Complexity Analysis}\hspace*{\fill} \\ 
The computational complexity of the proposed scheme in each iteration of the Dinkelbach method in Algorithm \ref{Dinkelbach algorithm} is dominated by the MM algorithm proposed in Algorithm \ref{algorithm for solving probi}. Since the optimization problem (\ref{convex problem}) consists of $NK$ variables and $3NK+3N+K+1$ linear convex constraints, its time complexity is given by $(NK)^3(3NK+3N+K+1)$ (asymptotically $ \approx (NK)^{4}$) which is polynomial time complexity \cite{Che}. Therefore, the overall complexity of our proposed scheme is of order $\mathcal{O}(I_\textnormal{Dinklebach}I_\textnormal{MM}(NK)^{4})$, where $I_\textnormal{Dinkelbach}$ and $I_\textnormal{MM}$ are the number of iterations required for reaching convergence in Dinkelbach and MM method, respectively. More specifically, $I_\textnormal{MM}$ is the required iterations for the D.C. programming with the interior point method employed by CVX to solve problem (\ref{convex problem}) given by $I_\textnormal{MM}=\log{\frac{3NK+3N+K+1}{t^{(0)} \Lambda \xi}}$, where $t^{(0)}$ is the initial point, $0<\!\Lambda\ll \!1$ is the stopping criterion, and $\xi$ is used for updating the accuracy of the method \cite{Boyd}, \cite{Khalili2}.

In this section, we proposed a novel resource allocation scheme for OFDMA networks with IBFD capability in both the BS and UEs. We maximized the EE while considering data rate requirements of UEs in DL and UL. Note that the proposed scheme can be applied to the multi-cell systems in which the UL transmission of the UEs and DL transmission of the BSs in a cell cause the inter-cell interference to the UEs and BSs in neighboring cells. In IBFD cellular networks with two-node architecture, the cell-edge UEs associated with different BSs can give severe inter-cell interference to each other \cite{nam}. To address this issue, simultaneous interference management in the BSs and UEs by employing the sub-carrier assignment, power control and user association in the resource allocation policy is required.

\section{Numerical Results}
In this section, we evaluate our proposed resource allocation scheme and compare it with existing schemes proposed in \cite{He}, \cite{wen}, \cite{nam}, and \cite{Leng} {to} demonstrate the efficacy of our scheme. We consider a single-cell system where the BS is placed at the midpoint of a square cell and the UEs are generated randomly inside the cell. {The number of UEs and sub-carriers are set as $N=10$ and $K=16$, respectively.} {The channel gains are independent and generated by {applying} Rayleigh fading, Log-Normal shadowing with standard deviation of $8$ dB, illustrated by an exponentially distributed random variable {of unit mean}, and the path loss model $\textnormal{PL}(d)=\textnormal{PL}_0 + 10 \theta \log_{10}{d}$. {Here}, $d$ is the distance between the corresponding UE and the BS, $\theta$ is the path loss exponent, and $\textnormal{PL}_0$ is the constant path loss coefficient which depends on the average channel attenuation and antenna characteristics \cite{Tang}. 
For the power consumption model, we set the constant circuit power consumed by {the} BS and UEs as $P_\textnormal{BS}^\textnormal{C} = 30$ dBm and $P_n^\textnormal{C} = 20$ dBm, respectively (as in \cite{loodaricheh} and \cite{Ng}). 
The power amplifier efficiency of {the} BS and UEs are set as $\epsilon_\textnormal{BS} = 30\%$ and $\epsilon_n = 20\%$ (as in \cite{powerModel} and \cite{Tang}), respectively.  
We set the maximum transmit power of {the} BS and UEs as $\overline{P}_\textnormal{BS} = 42$ dBm and $\overline{P}_n = 23$ dBm, similar to \cite{Ng}, \cite{crick} and \cite{Tang}. 
Parameters of {the} SIC constant for {the} BS and UEs are set as $\delta_\textnormal{BS} = -100$ dB and $\delta_n = -70$ dB, as in \cite{SIC2}, \cite{SICforBS}, \cite{SICforBS2} and \cite{SICforUE}. The fading coefficients of the SI channel at the BS and UEs (i.e., $l^\textnormal{SI}_\textnormal{BS}$ and $l^\textnormal{SI}_n$) are generated as independent and identically distributed Rician random variables with Rician factor $5$ dB, similar to \cite{Reverse} and \cite{Duarte}. 
Furthermore, the penalty factor to handle the integer sub-carrier assignment variables in our proposed scheme is set as $\lambda=10^{\log (\frac{\overline{P}_\textnormal{BS}}{N_\textnormal{BS}})}$.
The minimum data rate requirement in UL and DL for the $n$th UE are set as $\overline{R}^\textnormal{u}_n=\overline{R}^\textnormal{d}_n=2$ bps$/$Hz.
Other parameters such as the noise power and the sub-carrier bandwidth are set as $-120$ dBm and $180$ KHz, similar to \cite{Dong}, \cite{crick}, \cite{nam} and \cite{lohani}.}
The default values of all parameters used in the simulations are summarized in Table \ref{Simulation Parameters}. Unless mentioned otherwise, the default values are used. In all scenarios, the numerical results are obtained by averaging over 100 independent snapshots with randomly generated UEs' locations.
\begin{table}
\caption{simulation parameters}
\label{Simulation Parameters}
\centering
\begin{tabular}{|c|c|}\hline
{\bf Parameter} & {\bf Value} \\ \hline \hline
{Cell diameter} & {$250$ m} \\
{Number of UEs ($N$)} & {$10$} \\
{Number of sub-carriers ($K$)} & {$16$} \\
{Noise power at the BS and UEs ($N_\textnormal{BS}, N_n $)} & {$-120$} dBm \\
Sub-carrier bandwidth & {$180$} KHz \\
{Path loss exponent ($\theta$)} & {$3.76$} \\
{Constant path loss coefficient ($\textnormal{PL}_0$)} & {$128.1$ dB} \\
{SIC constant for {the} BS ($\delta_\textnormal{BS}$)} & {$-100$ dB} \\
{SIC constant for {the} UEs ($\delta_n$)} & {$-70$ dB} \\
{Fading coefficient of the SI channel} & {Rician factor $5$ dB} \\
{Power amplifier efficiency of {the} BS ($\epsilon_\textnormal{BS}$)}  & {$30\%$} \\
{Power amplifier efficiency of {the} $n$th UE ($\epsilon_n$)}  & {$20\%$} \\
Constant power consumed by the BS ($P_\textnormal{BS}^\textnormal{C}$) & {$30$ dBm} \\
Constant power consumed by the $n$th UE ($P_n^\textnormal{C}$) & {$20$ dBm} \\ 
Maximum transmit power of the BS ($\overline{P}_\textnormal{BS}$) & {$42$ dBm} \\
Maximum transmit power of the $n$th UE ($\overline{P}_n$) & {$23$ dBm} \\
{Minimum data rate requirement in UL ($\overline{R}^\textnormal{u}_n$)} & {$2$ bps$/$Hz} \\
{Minimum data rate requirement in DL ($\overline{R}^\textnormal{d}_n$)} & {$2$ bps$/$Hz} \\
{The penalty factor ($\lambda$)} & {$10^{\log (\frac{\overline{P}_\textnormal{BS}}{N_\textnormal{BS}})}$} \\
Channel realization number & $100$\\
\hline 
\end{tabular}
\end{table}

\subsection{Convergence of {Our} Proposed Algorithm}

\begin{figure}
\centering
\includegraphics[width=3.1in]{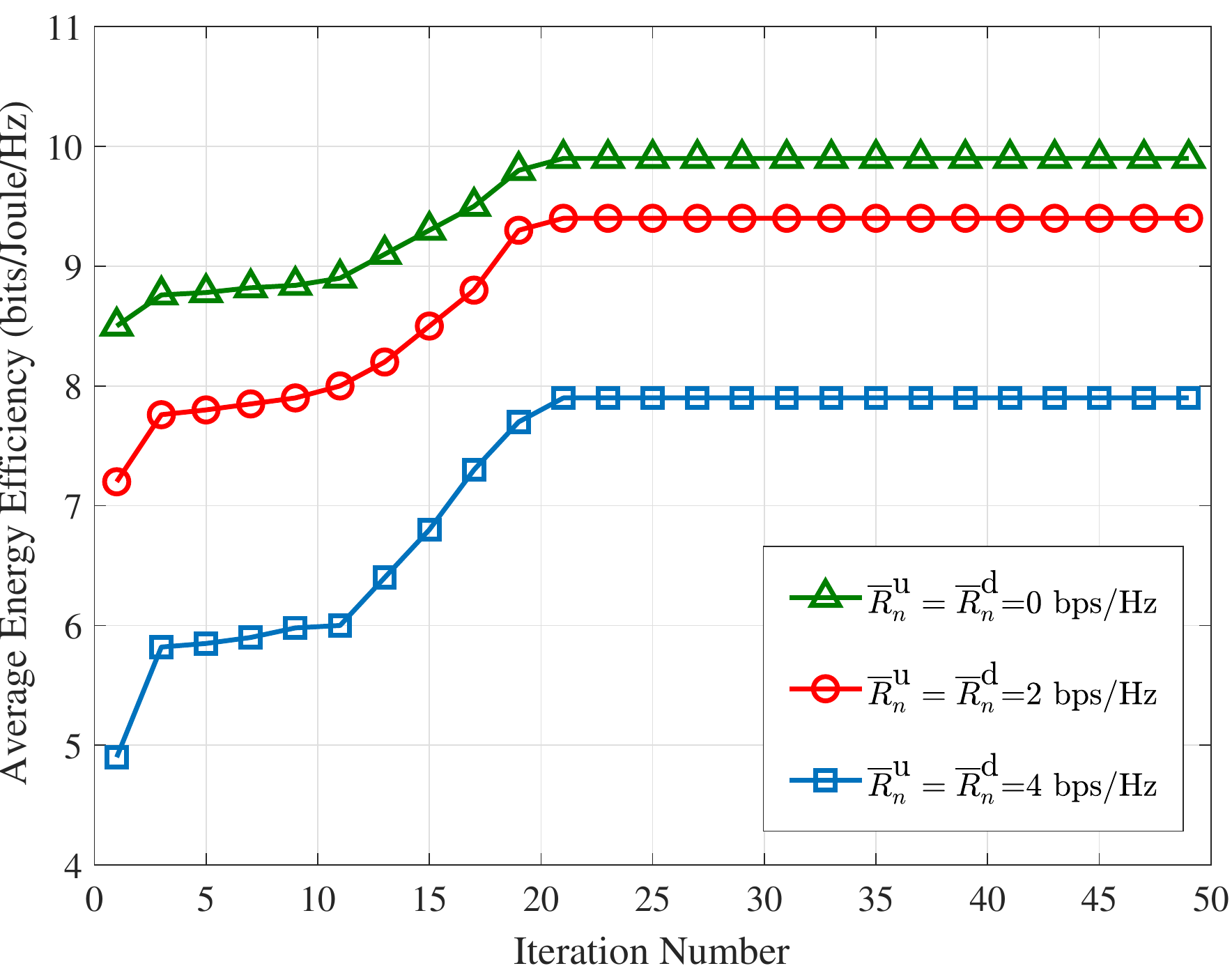}
\caption{{Average system EE (bits/Joule/Hz) versus {the} number of iterations for different levels of data rate requirements in UL and DL ($\overline{R}^\textnormal{u}_n$, $\overline{R}^\textnormal{d}_n$).}}
\label{fig_convergence}
\end{figure}

\begin{figure}
\centering
\includegraphics[width=3.1in]{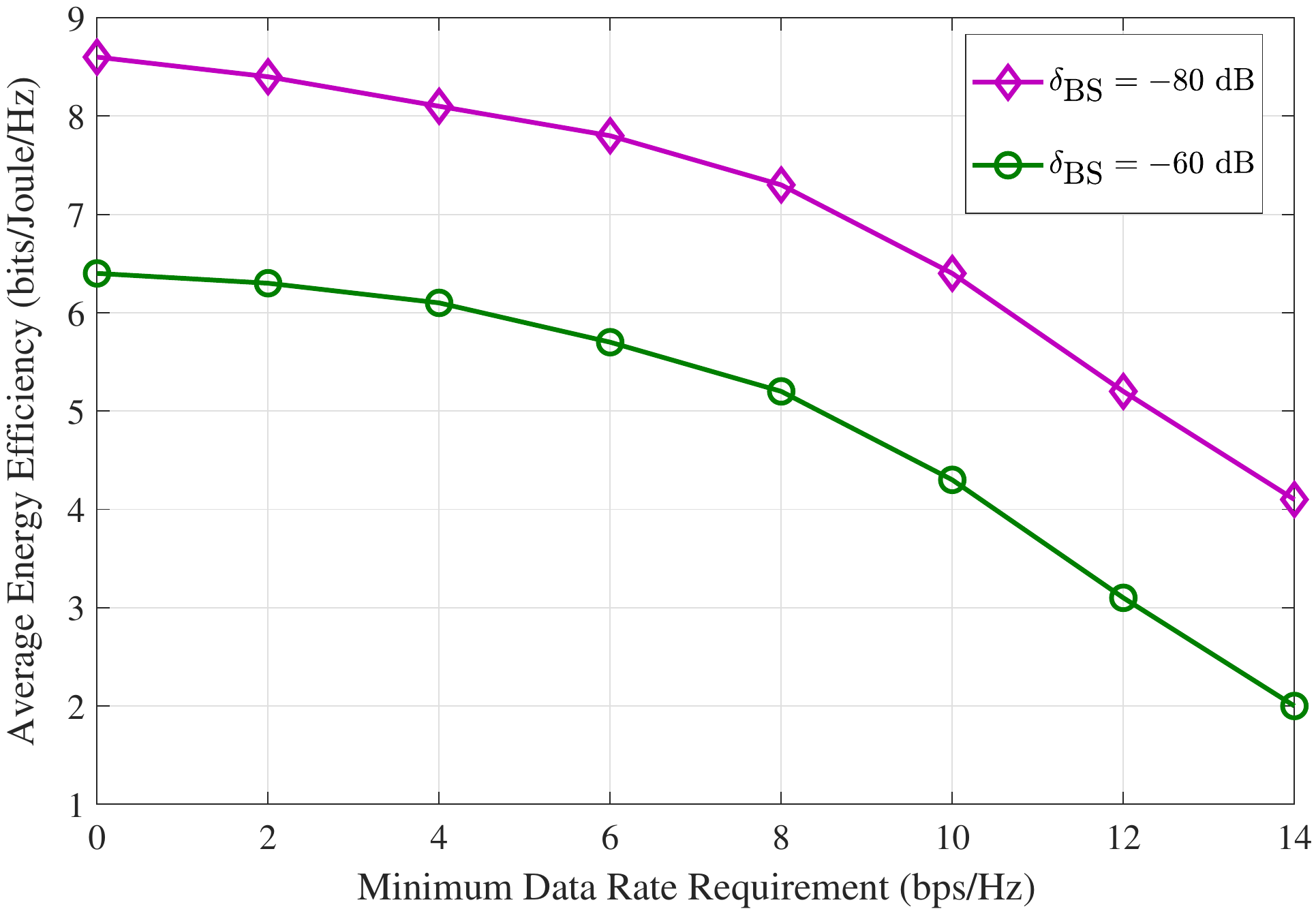}
\caption{{Average system EE (bits/Joule/Hz) versus {the} levels of data rate requirements in UL and DL ($\overline{R}^\textnormal{u}_n=\overline{R}^\textnormal{d}_n$) for different SIC constant {of {the} BS ($\delta_\textnormal{BS}$)}.}}
\label{fig_MinDataRate}
\end{figure}

First, we consider a system with different levels of data rate requirement{s} in UL and DL. Fig. \ref{fig_convergence} demonstrates the average system EE versus the number of iterations, for different minimum data rate requirements of {the} UEs in UL and DL ($\overline{R}^\textnormal{u}_n$, $\overline{R}^\textnormal{d}_n$). From Fig. \ref{fig_convergence}, it can be seen that the EE is monotonically non-decreasing function. Fig. \ref{fig_convergence} also shows the overall convergence of our proposed iterative algorithm. As observed in this figure, our proposed algorithm converges to a stationary point after 20 iterations. Hence, in the following case studies, we show the performance of our proposed algorithm for 20 iterations.
From Fig. \ref{fig_convergence}, we also see that increasing the minimum data rate requirement{s} of {the} UEs causes the average system EE to decrease. In fact, average EE has the highest value for the case of $\overline{R}^\textnormal{u}_n=\overline{R}^\textnormal{d}_n=0$ bps/Hz. 
The reason is that without the limitation on the data rate requirement for all UEs, the sub-carriers are assigned to the UEs with high channel gains in UL and DL which results in a higher system data-rate and a lower power consumption. Therefore, a higher average EE is achieved in a system without constraints on the QoS requirement for the UEs. To get in insight, we investigate the effect of the minimum data rate requirement on {the} EE as follows.

Fig. \ref{fig_MinDataRate} demonstrates the average system EE versus the levels of minimum data rate requirements in UL and DL, i.e., $\overline{R}^\textnormal{u}_n=\overline{R}^\textnormal{d}_n$, for different SIC constant of {the} BS ($\delta_\textnormal{BS}$). From this figure, we observe that increasing the minimum data rate requirement{s} of {the} UEs in UL and DL, i.e., $\overline{R}^\textnormal{u}_n$, $\overline{R}^\textnormal{d}_n$ leads to a decreased system EE. However, as $\overline{R}^\textnormal{u}_n$ and $\overline{R}^\textnormal{d}_n$ increase, the decline in the system EE becomes more significant. 
The reason is that as the minimum data rate requirement is low, the required transmit power to satisfy the QoS constraint is also low leading to lower total system power consumption and subsequently lower EE. In particular, as the minimum data rate requirement in UL and DL increases, the UEs and BS have to increase their transmit power in all of channels including low quality channels to satisfy the QoS requirement constraints in UL and DL, respectively. This results in a decreased system data rate and an increased total system power consumption which leads to a decreased system EE.

\subsection{{Probability of Feasibility}}

\begin{figure}
\centering
\includegraphics[width=3.1in]{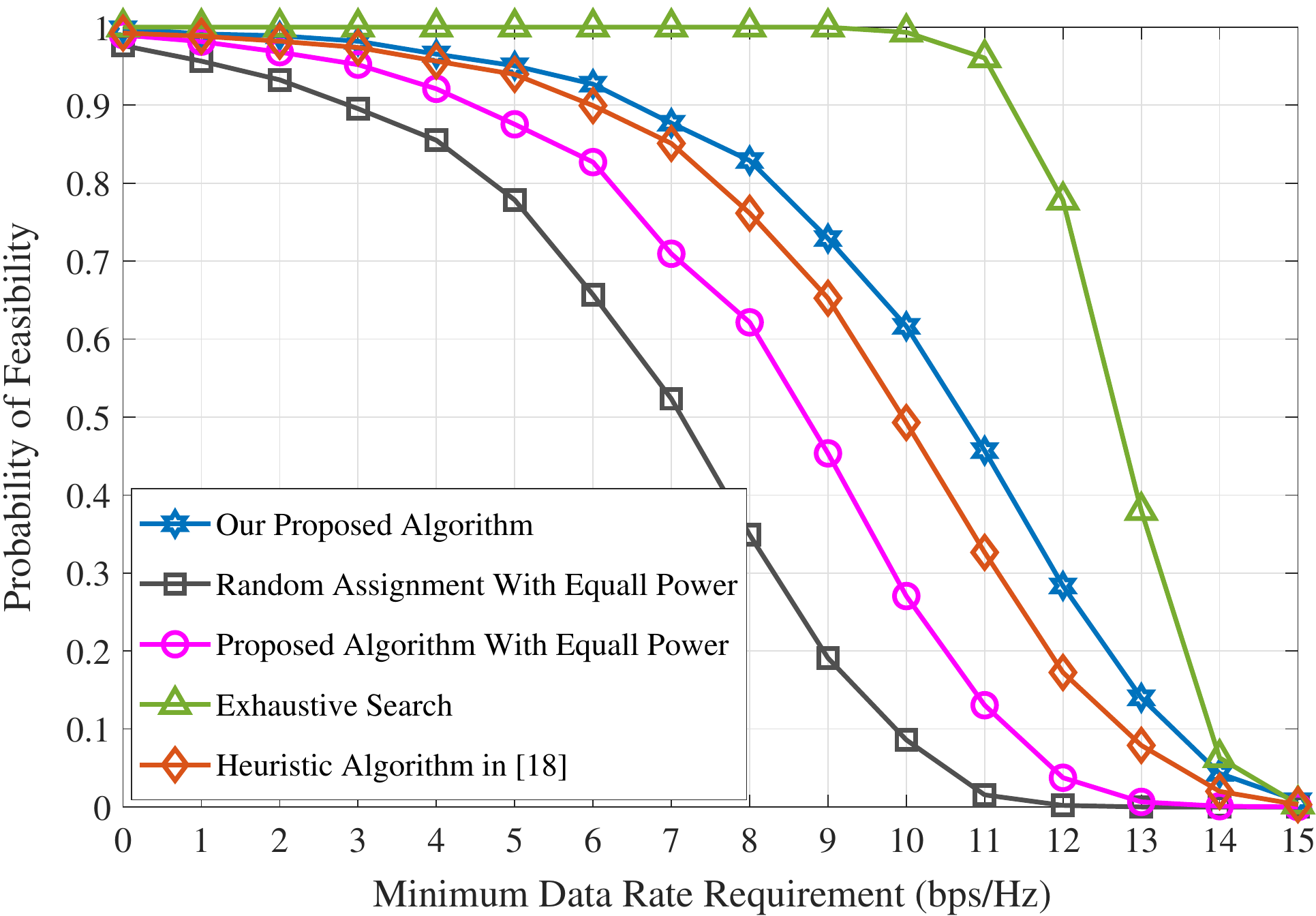}
\caption{{Probability of feasibility versus {the} levels of data rate requirements in UL and DL ($\overline{R}^\textnormal{u}_n=\overline{R}^\textnormal{d}_n$) for different algorithms ($N=2$ and $K=4$).}}
\label{fig_ProbFeasible}
\end{figure}

{Now, we investigate that for what levels of data rate requirements in UL and DL, the stated EE maximization problem is feasible. It is notable that for all simulation scenarios, if the UL and DL transmission rates for each UE do not meet the minimum data rate requirement in UL and DL, respectively, the optimization problem (\ref{convex problem}) is infeasible and so the total system sum-rate is equal to zero leading to zero system EE. To clarify more, we compare the probability of feasibility for our proposed algorithm with four different resource allocation schemes including the exhaustive search result. As the exhaustive search is a time consuming method, we consider a system with small number of UEs and sub-carriers, i.e., $N=2$ and $K=4$.
Fig. \ref{fig_ProbFeasible} demonstrates the probability of feasibility versus the levels of data rate requirements in UL and DL, i.e., $\overline{R}^\textnormal{u}_n=\overline{R}^\textnormal{d}_n$, for different algorithms. We compare our proposed algorithm with an algorithm with random sub-carrier assignment and equal power in which power in each assigned sub-carriers in UL and DL is equal to the total transmit power of UEs and BS divided by the number of assigned sub-carriers to the UEs and BS, respectively. We also compare our proposed algorithm with an algorithm with the same sub-carrier assignment with our algorithm but with equal power like previous algorithm. We compare our proposed algorithm with the heuristic algorithm proposed in \cite{wen} for {an} FD system with linear SI which first the power is allocated and then for the allocated power, the sub-carriers are assigned. We also compare our proposed algorithm with the exhaustive search result. 
From Fig. \ref{fig_ProbFeasible}, we observe that our proposed algorithm outperforms other algorithms except exhaustive search which finds the global optimal solution with high computational complexity. In fact, our proposed algorithm has better feasible set and its probability of feasibility is larger than the other ones. The reason is that our proposed algorithm obtains the joint sub-carrier assignment and power control policy and converges to a locally optimal solution which is a tight approximation for the optimal solution. This observation highlights the importance of joint resource allocation in improving the performance of IBFD cellular networks.}

\subsection{The Effect of SIC on {the} EE for Different {Cell Diameters}}

\begin{figure}
\centering
\includegraphics[width=3.2in]{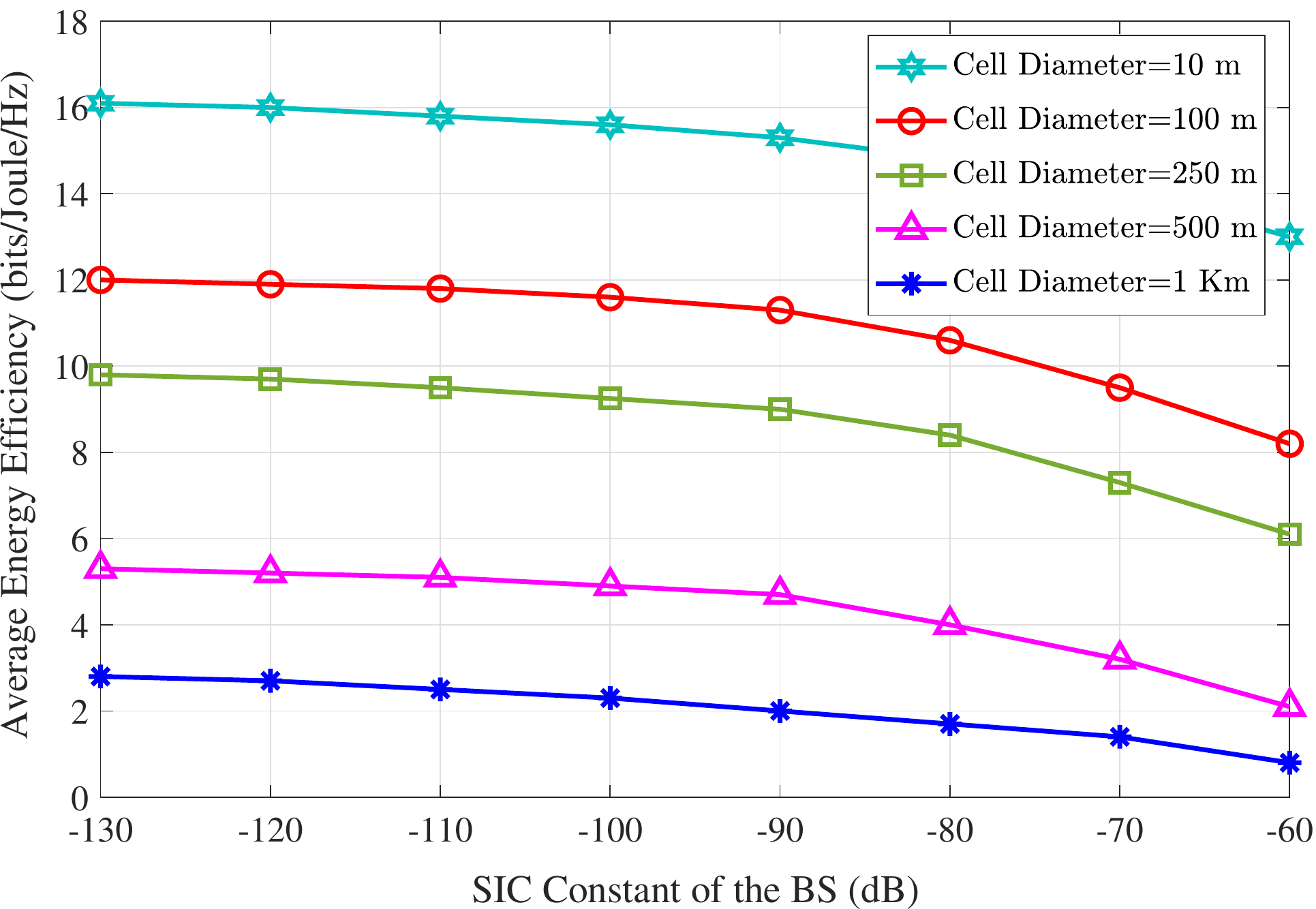}
\caption{{Average system EE (bits/Joule/Hz) versus SIC constant {of {the} BS ($\delta_\textnormal{BS}$)} for different {cell diameters} ({$\delta_n$=-70 dB}).}}
\label{fig_SIC_cellSideLength}
\end{figure}

In this scenario, we consider a system in which the SIC constant of {the} UEs ($\delta_n$) is set as -70 dB. Fig. \ref{fig_SIC_cellSideLength} demonstrates the average system EE versus the SIC constant of {the} BS for different {cell diameters}. We see that the average EE decreases with increasing the {cell diameter}. The reason is that as the {cell diameter} increases, the system data rate decreases and the transmit power{s} of both {the} UEs and BS increase {resulting} in {a} decreased system EE. From Fig. \ref{fig_SIC_cellSideLength}, we also observe {that} small values {of} the SIC constant {of {the} BS} results in an increased system EE. On the other hand, to achieve a certain amount of system EE, as the cell diameter increases, the SIC constant should be decreased which implies that {more efficient SIC methods} should be employed in long-distance communications. These results {suggest} that the IBFD capability is proper for short-distance communication such as small cell networks (i.e., femtocell and picocell networks) and D2D communications. Generally, the propriety of IBFD capability for short-distance communication has also been observed in \cite{IBFDforShortDistance1}, \cite{IBFDforShortDistance2} and \cite{IBFDforShortDistance3} for different system models and resource allocation problems. 
The same observations are also made for SIC constant of {the} UEs, the results of which are omitted due to similarity.

\subsection{{The Effect of Maximum Transmit Power on {the} EE}}

\begin{figure}
\centering
\includegraphics[width=3.2in]{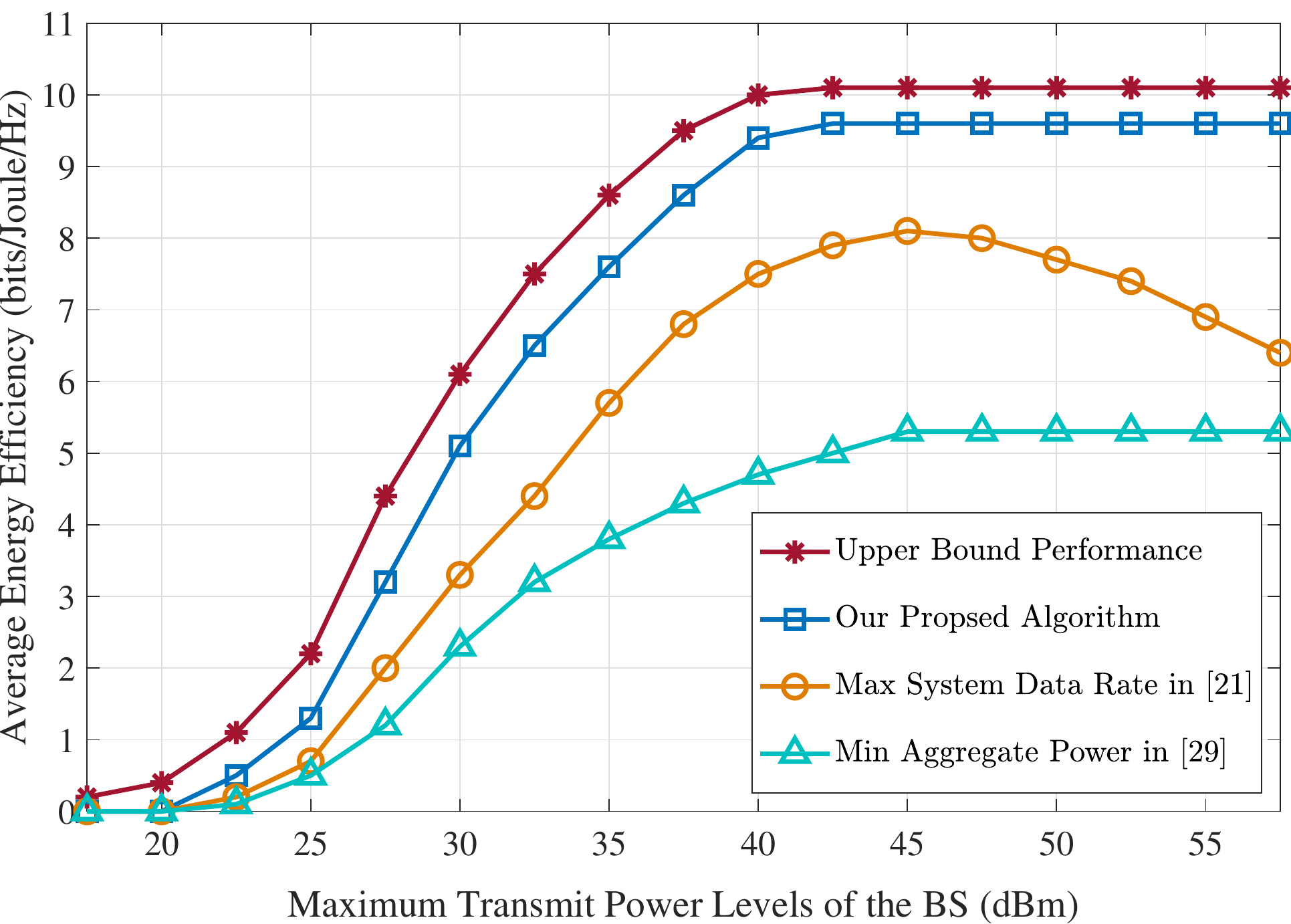}
\caption{{Average system EE (bits/Joule/Hz) versus the maximum transmit power levels of {the} BS ($\overline{P}_\textnormal{BS}$) for our proposed algorithm and baseline algorithm.}}
\label{fig_subcarrier_maxpow}
\end{figure}

Now, we consider a system in which the maximum transmit power of the $n$th UE is set as $\overline{P}_n=23$ dBm.  
Fig. \ref{fig_subcarrier_maxpow} shows the average system EE versus the maximum transmit power{s} of the BS ($\overline{P}_\textnormal{BS}$) for our proposed algorithm and two Baseline algorithms.
We observe that the average system EE by applying our proposed algorithm increases {by rising} maximum transmit power of the BS. In fact, the average system EE of our proposed algorithms is a monotonically non-decreasing function of the maximum transmit power. {However, the rate of the increase in EE is declined as the maximum transmit power becomes larger until the EE achieves a constant in the high transmit power regime. Particularly, starting from a small value of $\overline{P}_\textnormal{BS}$, the system EE first increases with increasing $\overline{P}_\textnormal{BS}$ and then saturates when $\overline{P}_\textnormal{BS} \geq 42$ dBm.} 
The reason is that applying our proposed algorithm, when the maximum system EE is obtained, a further increase in the maximum transmit power does not affect the EE.
In fact, although the higher values for transmit power increases the system total power consumption, {they} also cause the system data rate {to rise}. When the maximum available power of {the} BS is higher than certain levels, only a {portion} of {the power contributes to keeping} the EE at its maximum level, therefore the EE for higher value of {the} transmit power is constant.
In Fig. \ref{fig_subcarrier_maxpow}, the performance of upper bound is also illustrated. To find the upper bound, we consider a system with complete SIC capability for the BS and UEs and then we obtain the system EE for different maximum transmit power levels of {the} BS. From this figure, we see that our proposed algorithm achieves over 90\% of the upper bound performance. Also, as the optimal solution of problem (\ref{problem}) is located between the upper bound performance and the solution of our proposed algorithm, we can conclude that our proposed algorithm closely achieves the globally optimal solution. This observation highlights the tightness of the MM approximation used for making the stated problem convex in our paper.

Fig. \ref{fig_subcarrier_maxpow} also contains the average system EE of two other resource allocation algorithms: algorithms proposed in \cite{nam} for {the} system data-rate maximization in {an} FD system and \cite{Leng} for {the} aggregate power consumption minimization in {an} FD system.
It can be observed that our proposed algorithm outperforms two other algorithms proposed in \cite{nam} and \cite{Leng}. The reason is that the algorithm proposed in \cite{nam} uses excess power to increase the system data-rate by sacrificing EE, particularly in the high transmit power regime. On the other hand, the algorithm proposed in \cite{Leng} just considers the total system power consumption and after it reaches a certain system data-rate which satisfies the data rate constraints of UEs, it stops and does not use the executive power to increase the EE.

\subsection{Comparing the Performance of {Our} Proposed Algorithm with Existing Algorithms}

\begin{figure}
\centering
\includegraphics[width=3.2in]{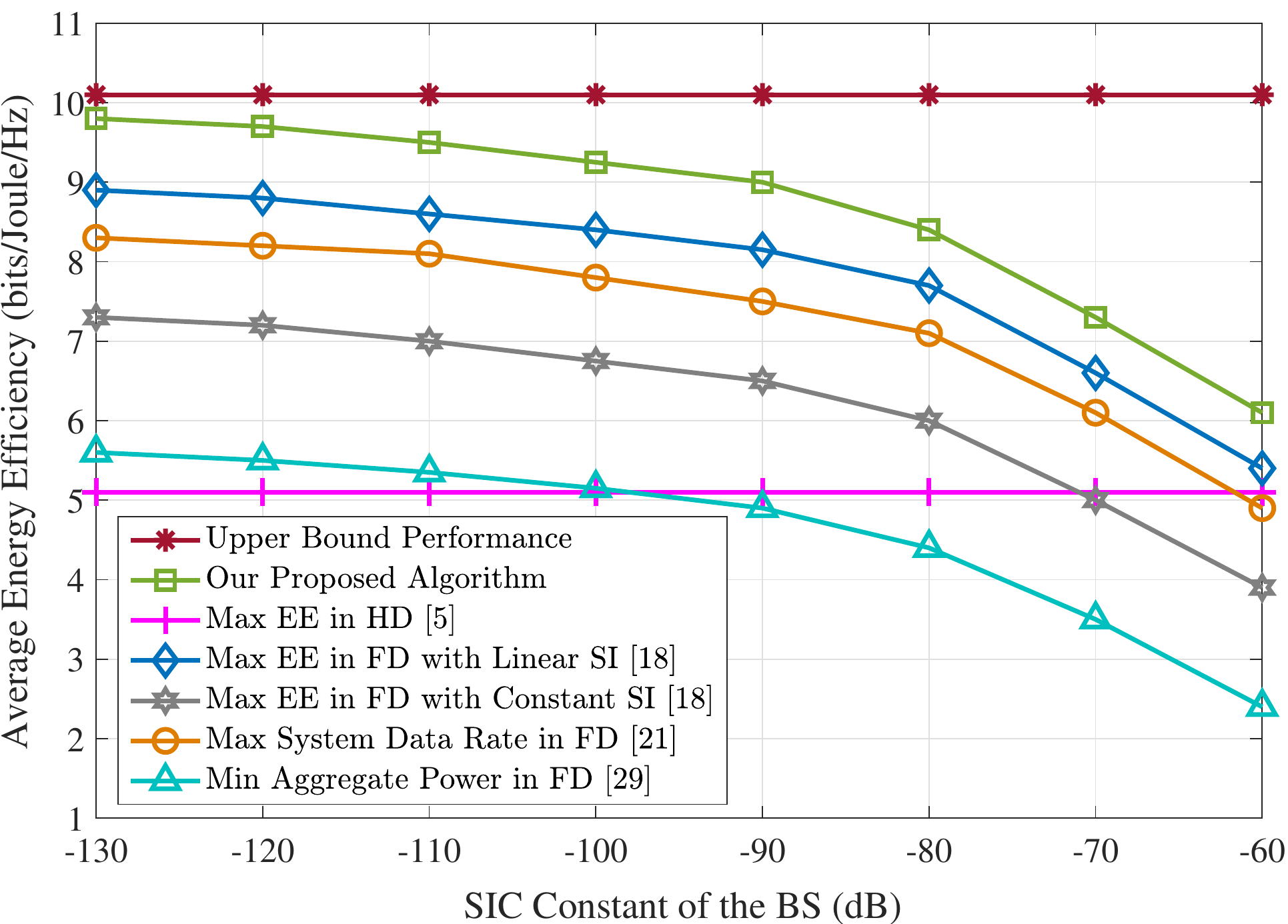}
\caption{Average system EE (bits/Joule/Hz) versus the SIC constant of the BS ($\delta_\textnormal{BS}$) for different algorithms ($\delta_n=-70$ dB).}
\label{fig_compare}
\end{figure}
Finally, we compare the performance of our proposed algorithm with other related algorithms. For this purpose, we consider five schemes proposed in \cite{He}, \cite{wen}, \cite{nam} and \cite{Leng} as well as the upper bound performance. As aforementioned, the stated problem for maximizing the EE in full-duplex (FD) system{s} with proportional SI has not been considered in the literature. Therefore, we compare our proposed algorithm with different scenarios available in the literature: we compare with \cite{He} for the EE maximization in an HD system, {with} \cite{wen} for the EE maximization in {an} FD system {with linear and constant SI,} {with} \cite{nam} for {the} system data-rate maximization in {an} FD system, with \cite{Leng} for {the} aggregate power consumption minimization in an FD system and with upper bound performmance which finds the optimal EE in a system with complete SIC capability for the BS and UEs (See Fig. \ref{fig_subcarrier_maxpow}). In these comparisons, the number of sub-carriers is set as $K=16$, which are assigned simultaneously to the UEs and BS for the UL and DL transmissions in FD systems (i.e., for our proposed algorithm, upper bound performance and the algorithms proposed in \cite{wen}, \cite{nam} and \cite{Leng}), while they are assigned only to the BS for the DL transmission in HD systems (i.e., for the algorithm proposed in \cite{He}). As in reality, we assume that the SI exists and it is proportional to the transmit power with Rician distribution for SI channels in our FD system model. In the proposed algorithm in \cite{wen}, to maximize the EE in FD systems with constant SI model, the SI was constant. Therefore, to carry out a fair comparison with our proposed algorithm, we include the SI propotional to the transmit power in data rate calculations of \cite{wen} with constant SI model. 
We also compare our proposed algorithm with heauristic algorithm proposed in \cite{wen} for linear SI model.
Fig. \ref{fig_compare} demonstrates the average system EE versus the SIC constant of the BS for seven different scenarios: the problem of EE maximization in FD systems with proportional SI (i.e., our proposed algorithm), the problem of EE maximization in HD systems \cite{He}, the problem of EE maximization in FD systems with linear SI \cite{wen}, the problem of EE maximization in FD systems with constant SI \cite{wen}, the problem of system data rate maximization in FD systems \cite{nam}, the problem of aggregate power minimization in FD systems \cite{Leng} and the upper bound performance. As observed in Fig. \ref{fig_compare}, our proposed algorithm outperforms the other algorithms proposed in the literature and it has the closest performance to the upper bound\footnote{{Since the upper bound performance finds the optimal EE in a system with complete SIC for the BS and UEs, its achieved EE is independent from the value of the SIC constant of the BS.}}.
More specifically, the EE achieved by our proposed algorithm {that maximizes} EE with proportional SI is 
more than that achieved by the algorithms proposed to maximize the system data rate and minimize the aggregate power. The reason is that the algorithms proposed to maximize the system data rate and minimize the aggregate power just optimize the system data rate and total power consumption, respectively while our proposed algorithm optimizes both of them which leads to a higher EE.
In addition, as observed in Fig. \ref{fig_compare}, our proposed algorithm performs 20\% and 35\% better than the algorithms proposed to maximize EE in FD system{s} with linear and constant SI model, respectively. The reason is that the algorithm proposed in \cite{wen} for linear SI model is a heuristic algorithm which decouples the problem into two sub-problems of power control and sub-carrier assignment, while our proposed algorithm finds sub-carrier assignment and power control policy jointly and it closely achieves the globally optimal solution (See Fig. \ref{fig_subcarrier_maxpow}). In addition, although the algorithm proposed in \cite{wen} for constant SI model finds an optimal solution, it does not have a good performance in a realistic system where the SI is proportional to the transmit power.
The important observation from Fig. \ref{fig_compare} is that by applying the IBFD capability with efficient SIC techniques, i.e. $\delta_\textnormal{BS} \leq -100$ dB and $\delta_n \leq -70$ dB, our resource allocation scheme can operate 75\% more energy efficiently than that in HD system{s}\footnote{Since the proposed algorithm in \cite{He} {that} maximizes {the} EE operates in HD system{s}, its achieved EE is independent from the value of SIC constant of the BS.}. The reason is that although algorithms that maximize EE in HD systems achieve a lower power consumption, a two-fold data rate is achieved by algorithms that maximize EE in FD systems. The same results are also obtained for SIC constant of {the} UEs which {omitted} due to similarity.

\section{Conclusions}
We have studied the EE maximization problem subject to the maximum transmit power of {the} BS and UEs, {while} satisfying QoS requirements of the UEs at UL and DL in OFDMA IBFD network{s}. We addressed this problem by proposing an algorithm for jointly optimizing {the} power control and sub-carrier assignment in {the} UL and DL. As the formulated optimization problem {was} non-convex, solving it in original form was difficult. Thus, we first reformulated it in a problem by decoupling the transmit power and sub-carrier assignment variables. Then by applying Dinkelbach method, we transformed the fractional objective function to a subtractive form. Next, we made the inner problem in each iteration of Dinkelbach algorithm convex by using MM algorithm and handled the integer sub-carrier assignment variables by applying \emph{abstract Lagrangian duality} and introducing a penalty function. We showed that MM approximation used for making the inner problem convex is a tight lower band of the original inner problem. Finally, we proposed an iterative resource allocation scheme to solve the inner problem which converges to the locally optimal solution. Simulation results {showed} that our proposed algorithm quickly converges and outperforms current schemes in the literature such as the algorithm proposed to minimize the aggregate power or maximize the system data rate. Also, our simulation results demonstrated that by applying the IBFD capability in cellular networks with efficient SIC techniques, the resource allocation scheme can operate 75\% more energy efficiently than that in {an} HD system. An interesting future work includes considering the case of multi-antenna BS which requires joint beamforming design, sub-carrier assignment and power control solutions.

\appendix
\addcontentsline{toc}{section}{Appendices}
\renewcommand{\thesubsection}{\Alph{subsection}}

{
\subsection{Proof of Proposition \ref{proposition1}}\label{appendix A}
We start the proof of Proposition \ref{proposition1} by using the \emph{abstract Lagrangian duality}.~The primal problem of (\ref{new problem}) is written as 
\begin{equation}\label{30}
p^{*}=\max_{(\widetilde{\bold{p}}, \bold{x}) \in \mathcal{D}}\min_{\lambda}L(\bold{x},\widetilde{\bold{p}},\lambda),
\end{equation}
where the dual problem of the (\ref{new problem}) is given by
\begin{equation}\label{31}
d^{*}=\min_{\lambda}\max_{(\widetilde{\bold{p}}, \bold{x}) \in \mathcal{D}}L(\bold{x},\widetilde{\bold{p}},\lambda)\triangleq \min_{\lambda}\theta(\lambda),
\end{equation}
where $\theta(\lambda)$ is defined as $\displaystyle{\theta(\lambda)\triangleq \max_{(\widetilde{\bold{p}}, \bold{x}) \in \mathcal{D}}L(\bold{x}, \widetilde{\bold{p}})}.$
Based on the weak duality theorem, we have the following equality:
\begin{equation}\label{32}
p^{*}=\max_{(\widetilde{\bold{p}}, \bold{x}) \in \mathcal{D}} \min_{\lambda}L(\bold{x},\widetilde{\bold{p}},\lambda) \leq \min_{\lambda}\theta(\lambda)=d^{*}.
\end{equation}
It should be noted that for $\widetilde{\bold{p}}, \bold{x} \in \mathcal{D}$, we have two cases where each case should be studied. \\
\textbf{Case 1}: Assume that at the optimal point, we have:
\begin{equation}\label{33}
\sum_{\forall n \in \mathcal{N}}{\sum_{\forall k \in \mathcal{K}}{\left( x_{n,k} - (x_{n,k})^2 \right)}}=0.
\end{equation}
In this case, $d^{*}$ is also a feasible solution of (\ref{new problem}).~Accordingly,~substituting the optimal value of $\lambda$,~i.e., $\lambda^{*}$,~into the optimization problem (\ref{new problem}) leads to the following equation:
\begin{equation}\label{34}
d^{*}= \theta(\lambda^{*})=\max_{(\widetilde{\bold{p}}, \bold{x}) \in \mathcal{D}} \min_{\lambda}L(\bold{x},\widetilde{\bold{p}},\lambda)=p^{*}.  
\end{equation}
Furthermore,~referring to (\ref{lagrangian function}), in this region $\theta(\lambda)$ is a monotonically decreasing function with respect to $\lambda$.~On the other hand,~it is specified that $d^{*}=\min_{\lambda}\theta(\lambda)$.~Hence,~we have:
\begin{equation}\label{35}
d^{*}=\theta(\lambda), ~\forall \lambda\geq \lambda^{*}. 
\end{equation}
Equation (\ref{35}) indicates that for any value of $\lambda\geq \lambda^{*}$, the solution of (\ref{new problem with penalty term}) leads the optimal solution of (\ref{new problem}).\\
\textbf{Case 2}: Assume that, $x_{n,k}$ take values $0<x_{n,k}<1$,~causing:
\begin{equation}
  \sum_{\forall n \in \mathcal{N}}{\sum_{\forall k \in \mathcal{K}}{\left( x_{n,k} - (x_{n,k})^2 \right)}}>0.  
\end{equation}
In this case, referring to (\ref{lagrangian function}) and $\theta(\lambda)$,~at the optimal point $\theta(\lambda^*)$ tends to $-\infty$.~However,~this may not happen as it contradicts with primal solution (i.e., $\max-\min$ inequality) which states that $\theta(\lambda^*)$ is bounded from below by solution of (\ref{new problem}) which is always greater than zero.~Thus,~at the optimal point, we have $\sum_{\forall n \in \mathcal{N}}{\sum_{\forall k \in \mathcal{K}}{\left( x_{n,k} - (x_{n,k})^2 \right)}}=0$,~and the result for the first case is hold.~This completes the proof.}

{
\subsection{Proof of Proposition \ref{proposition2}}\label{appendix B}
The approximation (\ref{taylor approximation}) makes a tight lower bound of $\bold{L}(\mathbf{x},\widetilde{\bold{p}})$. 
Since $e_2(\bold{x},\widetilde{\bold{p}})$ is a concave function,~the gradient of $e_2(\bold{x},\widetilde{\bold{p}})$ is supper-gradient \cite{Boyd} as follows:
\begin{equation}
e_2(\bold{x},\widetilde{\bold{p}})\leq \widetilde{e}_2(\bold{x}, \widetilde{\bold{p}}) .
\end{equation}
It is noteworthy that $e_1(\bold{x},\widetilde{\bold{p}})-e_2(\bold{x},\widetilde{\bold{p}}) \geq e_1(\bold{x},\widetilde{\bold{p}})-\widetilde{e}_2(\bold{x}, \widetilde{\bold{p}})$.~Moreover,~the equality holds when $\bold{x}=\bold{x}^{(t-1)}$ and $\widetilde{\bold{p}}=\widetilde{\bold{p}}^{(t-1)}$ which shows the tightness of the lower bound.~In addition,~we can conclude that the solution obtained by incorporating MM approximation would be improved at the end of each iteration.
The objective function of (\ref{new problem with penalty term}) in $t$-th iteration is $e_1(\bold{x}^{(t)},\widetilde{\bold{p}}^{(t)})-e_2(\bold{x}^{(t)},\widetilde{\bold{p}}^{(t)})$. Hence, we have the following equation:
\begin{align}
&e_1(\bold{x}^{(t+1)},\widetilde{\bold{p}}^{(t+1)})-e_2(\bold{x}^{(t+1)},\widetilde{\bold{p}}^{(t+1)})\geq e_1(\bold{x}^{(t+1)},\widetilde{\bold{p}}^{(t+1)})\nonumber\\
&- e_2(\bold{x}^{(t)},\widetilde{\bold{p}}^{(t)}) - \nabla_{\widetilde{\bold{p}}}{e_2^T\!\big(\bold{x}^{(t)}\!,\!\widetilde{\bold{p}}^{(t)}\big)} \!\big(\widetilde{\bold{p}}^{(t+1)}\!-\!\widetilde{\bold{p}}^{(t)}\big) \nonumber\\
&-\!\!\nabla_{\bold{x}}e_2^T\!\big(\bold{x}^{(t)}\!,\!\widetilde{\bold{p}}^{(t)}\!\big)\! \big(\bold{x}^{(t+1)}\!\!-\!\bold{x}^{(t)}\!\big)\!\!=\!\!\max_{\widetilde{\bold{p}},\!\bold{x}}e_1(\bold{x},\widetilde{\bold{p}})\!-\!e_2(\bold{x}^{(t)},\widetilde{\bold{p}}^{(t)})\nonumber\\
&-\nabla_{\widetilde{\bold{p}}}{e_2^T\!\big(\bold{x}^{(t)}\!,\!\widetilde{\bold{p}}^{(t)}\big)}\!\big(\widetilde{\bold{p}}^{}\!-\!\widetilde{\bold{p}}^{(t)}\big)-\nabla_{\bold{x}}e_2^T\big(\bold{x}^{(t)}, \widetilde{\bold{p}}^{(t)}\big)\big(\bold{x}^{}-\bold{x}^{(t)}\big)\nonumber\\
&\geq \!e_1(\bold{x}^{(t)},\widetilde{\bold{p}}^{(t)})\!-\!e_2(\bold{x}^{(t)},\widetilde{\bold{p}}^{(t)})\!-\!\nabla_{\widetilde{\bold{p}}}{e_2^T\!\big(\bold{x}^{(t)}\!,\!\widetilde{\bold{p}}^{(t)}\big)}\!\big(\widetilde{\bold{p}}^{(t)}\!-\!\widetilde{\bold{p}}^{(t)}\big)\nonumber\\
&-\!\nabla_{\bold{x}}e_2^T\big(\bold{x}^{(t)},\!\widetilde{\bold{p}}^{(t)}\big)\big(\bold{x}^{(t)}\!\!-\!\bold{x}^{(t)}\big)\!=\!e_1(\bold{x}^{(t)},\!\widetilde{\bold{p}}^{(t)})\!-\!e_2(\bold{x}^{(t)},\!\widetilde{\bold{p}}^{(t)})
\end{align}
Thus, by solving the convex lower bound in (\ref{convex problem}),~the proposed iterative algorithm generates a sequence of feasible solutions, i.e., $\widetilde{\bold{p}}^{(t+1)}$ and $\bold{x}^{(t+1)}$.~One may conclude that the solution of (\ref{new problem with penalty term}) would be improved and takes larger values as iterations continue which yields a locally optimal solution.}

\bibliographystyle{IEEEtran}
\bibliography{Mybib}


\bibliographystyle{IEEEtran}

\begin{IEEEbiography}[{\includegraphics[width=1in,height=1.25in,clip,keepaspectratio]{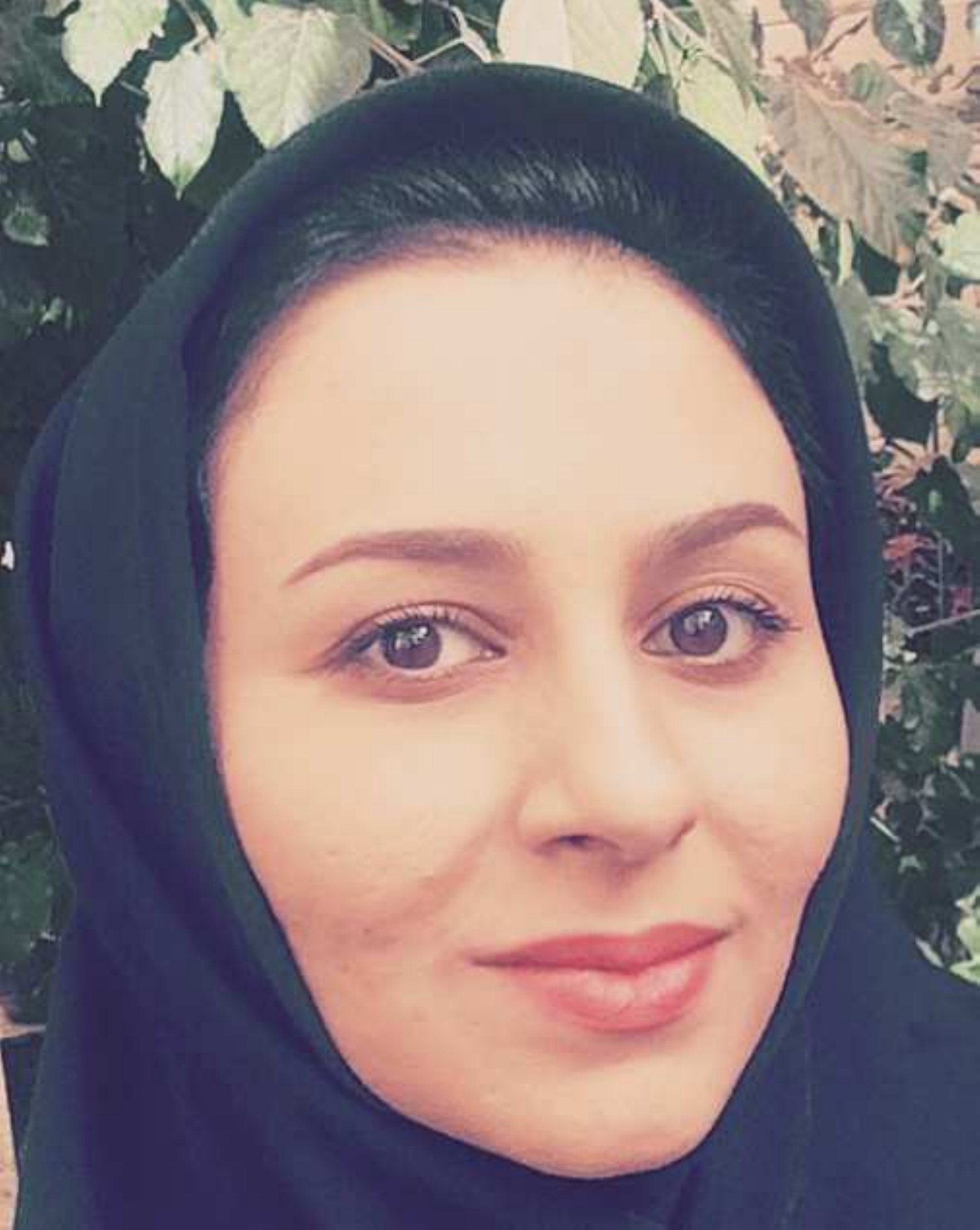}}]
{Rojin Aslani} (S'19) received her B.Sc. degree in Information Technology Engineering from University of Tabriz, Tabriz, Iran, in 2011 and her M.Sc. degree in Information Technology Engineering (Computer Networks) from Amirkabir University of Technology, Tehran, Iran, in 2013. She is pursuing the Ph.D. degree in Computer Engineering (Computer Networks) in Amirkabir University of Technology, Tehran, Iran. Currently, she is a visiting research scholar at the Department of Electrical and Computer Engineering, University of Nevada, Las Vegas, NV, USA. Her current research area include resource allocation in wireless networks, vehicular communications, and optimization.
\end{IEEEbiography}

\vspace{-25px}

\begin{IEEEbiography}[{\includegraphics[width=1in,height=1.25in,clip,keepaspectratio]{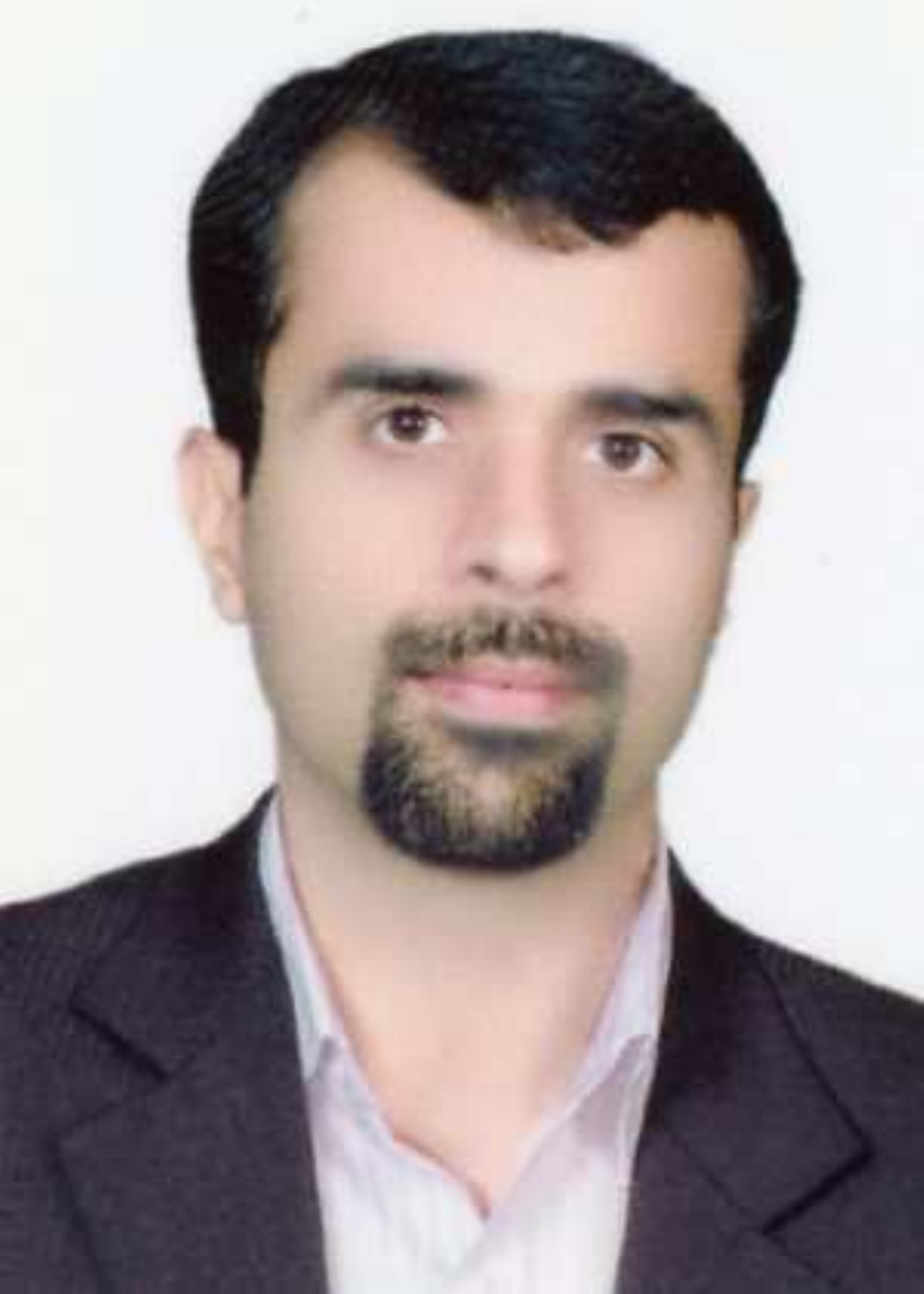}}]
{Mehdi Rasti} (S'08-M'11) received his B.Sc. degree from Shiraz University, Shiraz, Iran, and the M.Sc. and Ph.D. degrees both from Tarbiat Modares University, Tehran, Iran, all in Electrical Engineering in 2001, 2003 and 2009, respectively. 
From November 2007 to November 2008, he was a visiting researcher at the Wireless@KTH, Royal Institute of Technology, Stockholm, Sweden. From September 2010 to July 2012 he was with Shiraz University of Technology, Shiraz, Iran, after that he joined the Department of Computer Engineering and Information Technology, Amirkabir University of Technology, Tehran, Iran, where he is now an assistant professor. 
From June 2013 to August 2013, and from July 2014 to August 2014 he was a visiting researcher in the Department of Electrical and Computer Engineering, University of Manitoba, Winnipeg, MB, Canada. His current research interests include radio resource allocation in wireless networks and network security.
\end{IEEEbiography}

\vspace{-25px}

\begin{IEEEbiography}[{\includegraphics[width=1in,height=3.25in,clip,keepaspectratio]{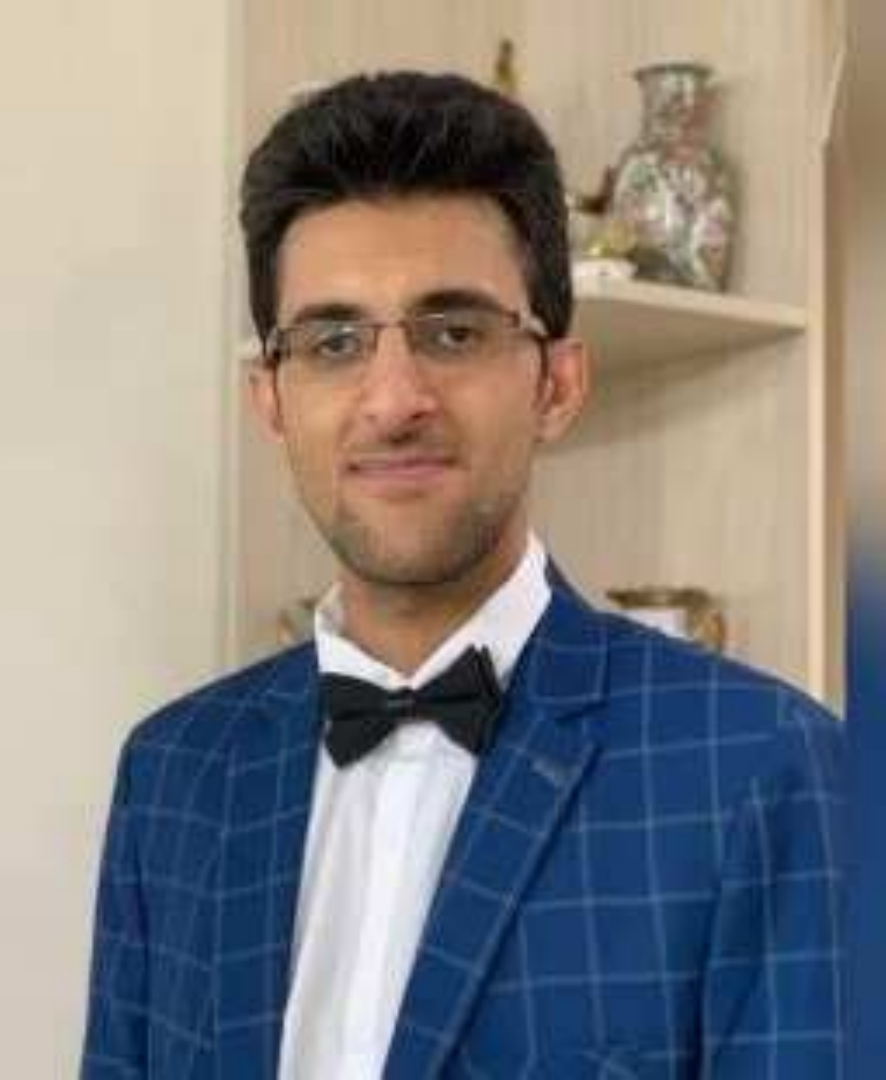}}]
{Ata Khalili} (S’18) received the B.Sc. degree and M.Sc. degree with first class honors in Electronic Engineering and Telecommunication Engineering from the Shahed University in 2016 and 2018, respectively. He is now working as a visiting researcher in the Department of Computer Engineering and Information Technology, Amirkabir University of Technology, Tehran, Iran. His research interests include convex and non-convex optimization, resource allocation in wireless communication, Green communication, and mobile edge computing.
\end{IEEEbiography}

\end{document}